\newtheorem{theorem}{Theorem}
\newtheorem{corollary}[theorem]{Corollary}
\newtheorem{lemma}[theorem]{Lemma}
\newtheorem{definition}[theorem]{Definition}
\newtheorem{alg}[theorem]{Algorithm}
\newcommand{\myparagraph}[1]{\paragraph{#1.}}
\newcommand{\eps}{\varepsilon}
\newcommand{\epssdp}{\varepsilon_{\rm sdp}}
\newcommand{\bra}[1]{\langle #1|}
\newcommand{\ket}[1]{|#1\rangle}
\newcommand{\braket}[2]{\langle #1|#2\rangle}
\newcommand{\ketbra}[2]{\ket{#1}{\bra{#2}}}
\newcommand{\Tr}{Tr} %CHECK
\newcommand{\OPT}{{\rm OPT}}
\newcommand{\QMA}{{\rm QMA}}
\newcommand{\NP}{{\rm NP}}
\newcommand{\beq}{\begin{equation}}
\newcommand{\eeq}{\end{equation}}
\newcommand{\trace}{{\rm Tr}}
\newcommand{\norm}[1]{\left\|\,#1\,\right\|}       % norm
\newcommand{\onorm}[1]{\norm{#1}_{\mathrm{1}}}      % Euclidean norm for vectors
\newcommand{\enorm}[1]{\norm{#1}_{\mathrm{2}}}      % Euclidean norm for vectors
\newcommand{\trnorm}[1]{\norm{#1}_{\mathrm {tr}}}  % trace norm
\newcommand{\fnorm}[1]{\norm{#1}_{\mathrm {F}}}    % frobenius norm
\newcommand{\snorm}[1]{\norm{#1}_{\mathrm {\infty}}}    % spectral norm
\newcommand{\set}[1]{{\left\{#1\right\}}}    % braces for set notation
\newcommand{\ve}[1]{\mathbf{#1}}
\newcommand{\abs}[1]{\left\lvert #1 \right\rvert}
\newcommand{\optprod}{\OPT_P}
\newcommand{\optt}{\operatorname{OPT_2}}
\newcommand{\poly}{\operatorname{poly}}
\newcommand{\cc}{d^{\frac{k}{2}}}
\newcommand{\complex}{{\mathbb C}}
\newcommand{\reals}{{\mathbb R}}
\newcommand{\spa}[1]{\mathcal{#1}}
\newcommand{\klh}{MAX-$k$-local Hamiltonian}
\begin{document}

\title{Approximation algorithms for QMA-complete problems}
\author{Sevag Gharibian\footnote{David R. Cheriton School of Computer Science and Institute for Quantum Computing, University of Waterloo, Waterloo N2L 3G1, Canada. Supported by the Natural Sciences and Engineering Research Council of Canada (NSERC), NSERC Michael Smith Foreign Study Supplement, and by EU-Canada Transatlantic Exchange Partnership programme.}
\and
 Julia Kempe\footnote{Blavatnik School of Computer Science, Tel Aviv University, Tel Aviv 69978, Israel and CNRS \& LRI,
 University of Paris-Sud, Orsay, France. Supported
by an Individual Research Grant of the Israeli Science Foundation, by European Research Council (ERC) Starting Grant
QUCO and by the Wolfson Family Charitable Trust.}}
\date{}
\maketitle

\begin{abstract}
Approximation algorithms for classical constraint satisfaction problems are one of the main research areas in
theoretical computer science. Here we define a natural approximation version of the QMA-complete local Hamiltonian
problem and initiate its study. We present two main results. The first shows that a non-trivial approximation ratio can
be obtained in the class $\NP$ using product states. The second result (which builds on the first one), gives a
polynomial time (classical) algorithm providing a similar approximation ratio  for dense instances of the problem. The
latter result is based on an adaptation of the ``exhaustive sampling method" by Arora et al.~\cite{AKK99} to the
quantum setting, and might be of independent interest.
\end{abstract}

%\vspace{105mm}

\section{Introduction and Results}\label{scn:intro}
In the last few years, the quantum analog of the class NP, the class QMA~\cite{KSV02}, has been extensively studied,
and several QMA-complete problems have been found~\cite{L06,B06,LCV07,BS07,R09,JGL09,SV09,WMN10}. Arguably the
most important QMA-complete problem is the $k$-local Hamiltonian problem~\cite{KSV02,KR03,OT05,KKR06,AGIK09}. Here, the
input is a set of Hamiltonians (Hermitian matrices), each acting on at most $k$-qubits each. The task is to determine
the largest eigenvalue of the sum of these Hamiltonians. This problem generalizes  the central NP-hard problem
MAX-$k$-CSP, where we are given a set of Boolean constraints on $k$ variables each, with the goal to satisfy as many
constraints as possible. The local Hamiltonian problem is of significant interest to complexity theorists and to
physicists studying properties of physical systems alike (e.g.~\cite{BV05,ADKLLR07,BDOT08,AALV09,CV09,LLMSS10,SC10}).

Moving to the classical scenario, the theory of NP-completeness is one of the great success stories of classical
computational complexity~\cite{AB90}. It was soon realized that many natural optimization problems are NP-hard, and are hence unlikely to have polynomial time algorithms. A natural question (both in theory and in practice) is to look for polynomial time algorithms that produce
solutions that are close to optimum. More precisely, one says that an algorithm achieves an \emph{approximation ratio}
of $c \in [0,1]$ for a certain maximization problem if on all inputs, the value of the algorithm's output is at least
$c$ times that of the optimum solution. The closer $c$ is to $1$, the better the approximation. The
investigation of approximation algorithms is, after decades of heavy research, still a very active area (e.g.,~\cite{H97,V01}). For many central NP-hard problems, tight polynomial time approximation algorithms
are known.

In the context of QMA-complete problems, it is thus natural to search for approximation algorithms for these
problems, and in particular for the local Hamiltonian problem. The question we address here is: \emph{How well can one efficiently approximate the $k$-local Hamiltonian problem?}

It should be noted that a large host of heuristics has been developed in the physics community to
approximate properties of local Hamiltonian systems (see, e.g.,~\cite{CV09} for a survey) and this area
is extremely important in the study of physical systems. However, the systematic complexity theoretic study of
approximation algorithms for QMA-complete problems is still very much in its infancy, and our work is one of the first
steps in this research direction. We note that there has been a lot of interest in recent years~\cite{AALV09,A06} in
establishing a so-called quantum PCP theorem (e.g.~\cite{AS98,ALMSS98}), which amounts to showing that for some
constant $c<1$ close enough to $1$, approximating the $k$-local Hamiltonian (or related problems) to within $c$ is
QMA-hard. Our results can also be seen as a natural continuation of that investigation.

\paragraph{Our results:}

Let us start by precisely defining the optimization version of the local
Hamiltonian problem, which is parameterized by two integers $k$ and $d$, which we always think of as constants.

\begin{definition}[MAX-$k$-local Hamiltonian problem on $d$-level systems (qudits)]  An instance of the problem
consists of a collection of $\binom{n}{k}$ Hermitian matrices, one for each subset of $k$ qudits. The matrix
$H_{i_1,\ldots,i_k}$ corresponding to some $1 \leq i_1 \le \cdots \le i_k \le n$ is assumed to act on those
qudits\footnote{Terms acting on less than $k$ qudits can be
incorporated by tensoring them with the identity.}, to be positive semidefinite, and to have operator norm at most $1$.
We call any pure or mixed state $\rho$ on $n$ qudits an \emph{assignment} and define its {\em value} to be $\Tr H \rho$
where $H=\sum_{i_1, \ldots ,i_k} H_{i_1,\ldots,i_k}$. The goal is to find the largest eigenvalue of $H$ (denoted
$\OPT$), or equivalently, the maximum value obtained by an assignment. We say that an algorithm provides an
\emph{approximation ratio} of $c \in [0,1]$ if for all instances, it outputs a value that is between $c \cdot \OPT$ and
$\OPT$.
\end{definition}

%To flash out our main ideas, in what follows we will illustrate our results with the MAX-$2$-local Hamiltonian problem
%on {\em qubits}, though our results are formulated for the general case of MAX-$k$-local Hamiltonians over
%$d$-dimensional systems (qudits).

This definition, we believe, is the natural quantum analog of the MAX-$k$-CSP problem. We note that it differs slightly
from the usual definition of the $k$-local Hamiltonian problem. Namely, we consider maximization (as opposed to
minimization), and also restrict the terms of $H$ to be positive semidefinite, and have norm at most $1$. As long as
one considers the \emph{exact} problem, these assumptions are without loss of generality, and do not affect the
definition, as seen by simply scaling the Hamiltonians and adding multiples of identity as necessary. However,
when dealing with the \emph{approximation} version, these assumptions are important for the problem to make sense; for
instance, one cannot meaningfully talk about approximation ratios if the optimum can take both negative and positive
values. That is why we require the terms to be positive semidefinite. The requirement that the terms have operator norm
at most $1$ does not affect the problem and later allows us to conveniently define dense instances.
Finally, changing the maximization to a minimization would lead to an entirely different approximation problem: the
quantum analogue of MIN-CSP (e.g.~\cite{KSTW01}). Minimization problems
are, generally speaking, harder than maximization problems, and we leave this research direction for future work. %We remark that our definition of \klh contains the special case of MAX-$k$-quantum SAT~\cite{B06} (when analogously defined).\snote{Added this sentence}

Before stating our results, we remark that there is a trivial way to get a $d^{-k}$-approximation for MAX-$k$-local
Hamiltonian. Observe that the maximally mixed state has at least $d^{-k}$ overlap with the reduced
density matrix of the optimal assignment on any $k$ particles. A similar thing holds classically, where a random
assignment gives (in expectation) a $d^{-k}$ approximation of MAX-$k$-CSP. We now describe our two main results.

\myparagraph{Approximation by product states} One inherently quantum property of the local Hamiltonian problem is the
fact that the optimal state might in general be highly entangled (and hence not efficiently describable in polynomial
time). This is why we do not require  outputting the assignment itself in the above definition. If, however, the
optimal assignment (or some other good assignment) was guaranteed to be a {\em tensor product state}, then we could describe it efficiently. The following theorem shows just that.

\begin{theorem}\label{thm:approxprodstate}
For an instance of MAX-$k$-local Hamiltonian with optimal value $\OPT$, there is a (pure) product state assignment that has
value at least $\OPT/d^{k-1}$.
\end{theorem}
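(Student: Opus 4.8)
The plan is to reduce first to pure product states by convexity, then to prove a geometric overlap bound for a single local term by induction on the locality $k$, and finally to combine all $\binom{n}{k}$ terms through one global separable state.

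\textbf{Reduction to pure product states.} It suffices to produce single-qudit density operators $\sigma_1,\dots,\sigma_n$ (possibly mixed) with $\Tr[H(\sigma_1\otimes\cdots\otimes\sigma_n)]\ge \OPT/d^{k-1}$. Indeed, writing each $\sigma_i=\sum_t p^i_t\proj{\phi^i_t}$ and expanding the tensor product exhibits $\sigma_1\otimes\cdots\otimes\sigma_n$ as a convex combination of pure product states $\proj{\phi^1_{t_1}\cdots\phi^n_{t_n}}$; since the objective $\Tr[H\,\cdot\,]$ is linear, at least one of these pure product assignments attains at least the value of the average. So from now on I work with products of single-qudit density operators.

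\textbf{Single-term geometric lemma.} Let $\ket{\Psi}$ be the optimal (top-eigenvector) assignment, which we may take to be pure, and write $\OPT=\sum_S\Tr[H_S\rho_S]$ with $\rho_S=\Tr_{\bar S}\proj{\Psi}$. The heart of the argument is the claim that every pure state $\ket{\Phi}$ on the $k$ qudits of a subset $S$ admits a product state with squared overlap at least $d^{-(k-1)}$. I would prove this by induction on $k$: expand $\ket{\Phi}=\sum_{a=0}^{d-1}\ket{a}\otimes\ket{v_a}$ along the first qudit of $S$, so that $\sum_a\|v_a\|^2=1$ and some branch has $\|v_{a}\|^2\ge 1/d$; fix that qudit to $\ket{a}$, normalize $\ket{v_a}$, apply the inductive product state on the remaining $k-1$ qudits (overlap $\ge d^{-(k-2)}$), and multiply the two factors, giving overlap $\ge \tfrac1d\cdot d^{-(k-2)}=d^{-(k-1)}$. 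Applying this to the top eigenvector of $H_S$ and using $H_S\succeq 0$ yields, for each term separately, a product state of value at least $\lambda_{\max}(H_S)/d^{k-1}\ge \Tr[H_S\rho_S]/d^{k-1}$.

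\textbf{Combining the terms (the main obstacle).} The difficulty is that distinct terms are optimized by distinct product states, whereas a single global assignment is required. The plan is to exhibit one separable state $\bar\sigma$ on all $n$ qudits (a convex mixture of product states) whose reduced operator on every $k$-subset dominates the rescaled target, namely $\bar\sigma_S\succeq d^{-(k-1)}\rho_S$. Given such a $\bar\sigma$, positivity of each $H_S$ would immediately give $\Tr[H\bar\sigma]=\sum_S\Tr[H_S\bar\sigma_S]\ge d^{-(k-1)}\sum_S\Tr[H_S\rho_S]=\OPT/d^{k-1}$, and the reduction above finishes the proof. I expect constructing $\bar\sigma$ to be the crux. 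On the extreme case of a maximally entangled term $\ket{\Phi}=\tfrac1{\sqrt d}\sum_j\ket{j\cdots j}$ the classically-correlated mixture $\tfrac1d\sum_j\proj{j\cdots j}$ does satisfy $\tfrac1d\sum_j\proj{j\cdots j}\succeq d^{-(k-1)}\proj{\Phi}$, which suggests building $\bar\sigma$ as a dephasing-type mixture of $\proj{\Psi}$; the delicate points are making a single mixture work simultaneously for all $\binom{n}{k}$ marginals and verifying the operator inequality for arbitrary (not just maximally entangled) $\rho_S$. A viable alternative, should the operator-domination route prove too rigid, is a direct probabilistic argument producing a distribution over product states that meets the per-term guarantee $d^{-(k-1)}\Tr[H_S\rho_S]$ in expectation, after which linearity again yields a single good product assignment.
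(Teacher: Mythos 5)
Your first two steps are correct: the convexity reduction to pure product states is exactly how the paper finishes, and your induction showing that every pure state on $k$ qudits has a product state with squared overlap at least $d^{-(k-1)}$ is sound. But these are the easy parts. The proof is not complete, and you say so yourself: the entire difficulty of the theorem lies in your third step, producing \emph{one} global separable assignment that is simultaneously good against all $\binom{n}{k}$ terms, and your proposal offers only a target ($\bar\sigma_S\succeq d^{-(k-1)}\rho_S$ for all $S$) and two candidate routes, neither of which is carried out. The per-term lemma cannot substitute for this, because the product states it produces differ from term to term, and there is no argument given that they can be reconciled. This is a genuine gap, and it is precisely the gap the paper's technical machinery exists to fill.

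For what it is worth, your target is the right one: demanding $\trace(\Pi\bar\sigma)\geq d^{-(k-1)}\trace(\Pi\proj{\Psi})$ for every $k$-local projector $\Pi$ is equivalent to the operator domination you ask for, and the paper's construction does achieve it. But your proposed route of a ``dephasing-type mixture of $\proj{\Psi}$'' cannot work in any \emph{fixed} basis: take $\ket{\Psi}=\ket{+}^{\otimes n}$ (already a product state) and a term $H_S=\proj{++}$; dephasing in the computational basis turns the reduced state on $S$ into $\frac{1}{4}\Id$, giving value $1/4<1/2=d^{-(k-1)}$. The basis of the mixture must adapt to the state itself. The paper does this by dephasing in the Schmidt bases of $\ket{\Psi}$, recursively across the $n-1$ cuts (the Recursive Schmidt Decomposition): the key step, the Mixing Lemma, shows that replacing the superposition $\sum_i\alpha_i\ket{w_i}\ket{v_i}$ by the mixture $\sum_i\alpha_i^2\,\proj{w_i}\otimes\proj{v_i}$ costs a factor of at most $d$ against any projector, and costs \emph{nothing} unless the projector crosses that particular cut; its proof hinges on the cross-term estimate
\begin{equation*}
\sum_{i<j}\braket{a_i}{a_j}+\braket{a_j}{a_i}\;\leq\;(d-1)\sum_i\enorm{\ket{a_i}}^2,\qquad \ket{a_i}:=\alpha_i\Pi\ket{w_i}\ket{v_i},
\end{equation*}
using $\Pi^2=\Pi$. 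Since a $k$-local projector crosses at most $k-1$ cuts, the single mixture obtained after all $n-1$ cuts loses at most $d^{k-1}$ against \emph{every} term simultaneously, which is exactly the simultaneity your proposal lacks. Until you supply a construction and an inequality of this kind, the theorem is not proved.
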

This result is {\em tight} for product states in the case of $2$-local Hamiltonians (we remark that $2$-local Hamiltonians are often the most relevant case from a physics perspective). For example, consider the Hamiltonian on $2$-qubits
that projects onto the EPR state $\frac{1}{\sqrt{2}}(\ket{00}+\ket{11})$. It is easy to see that no product state
achieves value more than $1/2$. For general $k$, we can only show that product states cannot achieve an approximation
ratio greater than $1/d^{\lfloor k/2 \rfloor}$ (see Sec.~\ref{scn:sepopt}).

%This result is {\em tight} for product states in the case of $2$-local Hamiltonians, seen by considering the Hamiltonian on $2$-qubits
%that projects onto the EPR state $\frac{1}{\sqrt{2}}(\ket{00}+\ket{11})$. We remark that $2$-local Hamiltonians are often the most relevant case from a physics standpoint. For general $k$, we can only show that product states cannot achieve an approximation ratio greater than $1/d^{\lfloor k/2 \rfloor}$ (see Sec.~\ref{scn:sepopt}).

If we could efficiently find the best product state assignment, we would obtain an algorithm achieving a non-trivial
$d^{-k+1}$ approximation ratio. Unfortunately, this problem is NP-complete, since it would allow one to solve (e.g.)
the special case of MAX-$k$-SAT, and hence we do not have such an algorithm. Still, the theorem has the following
interesting implication: It shows that unless $\NP=\QMA$, approximating the local Hamiltonian problem to within a factor less than $d^{-k+1}$ is not QMA-hard. This follows simply because product states have polynomial size
classical descriptions.

\myparagraph{A polynomial time approximation algorithm for dense instances}

Our second result gives a classical polynomial time approximation algorithm for {\em dense} instances of the local
Hamiltonian problem. This result is perhaps our technically most challenging one, and we hope the techniques we develop
might turn out useful elsewhere.

Dense instances of classical constraint satisfaction problems have been studied in depth~(see
e.g.~\cite{V96,FK96,GGR98,AKK99,VK00,AVKK02,BVK03,VKKV05}). Our result is inspired by work of Arora et al.~\cite{AKK99}
who provide a polynomial time approximation scheme, or PTAS (i.e., an efficient $1-\eps$ approximation algorithm for
any fixed $\eps>0$), for several types of dense constraint satisfaction problems. In the classical case, dense (for
$2$-local constraints) simply means that the average degree in the constraint graph is $\Omega(n)$, or equivalently,
that the optimum is $\Omega(n^2)$.  In
analogy, we define an instance of MAX-$k$-local Hamiltonian to be {\em dense} if $\OPT=\Omega(n^k)$, or equivalently,
if $\Tr(H \frac{Id}{d^n})=\Omega(n^k)$.\footnote{The equivalence  follows from the fact that the mixed state assignment $Id/d^n$ has value between $\OPT$ and $\OPT/d^k$.}

It is not hard to see that the (exact) dense local Hamiltonian problem remains QMA-hard (see Sec.~\ref{sscn:finalptas}).
The
dense case might be of practical interest to physicists who study systems of particles by incorporating all possible
interactions between them. Note that such instances are dense even if the interactions between particles are weak, so long as the interaction strengths are constants independent of $n$. Our second main result is the following:

\begin{theorem}\label{thm:approxdense}
For all $\eps >0$ there is a polynomial time $(1/d^{k-1} -\eps)$ approximation algorithm for the dense MAX-$k$-local
Hamiltonian problem over qudits.
\end{theorem}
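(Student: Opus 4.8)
The plan is to use Theorem~\ref{thm:approxprodstate} to reduce the problem to an additive approximation task, and then solve that task by a quantum adaptation of the exhaustive sampling method of~\cite{AKK99}. By Theorem~\ref{thm:approxprodstate} the best product-state value $\optprod$ satisfies $\optprod\ge\OPT/d^{k-1}$, and every product state has value at most $\OPT$, so it suffices to output (the value of) a product state whose value is at least $\optprod-\eps\OPT$: since the instance is dense, $\OPT=\Omega(n^k)$, so an additive error of $\eps' n^k$ becomes the multiplicative loss $\eps$ claimed in the theorem. For a product state $\rho=\rho_1\otimes\cdots\otimes\rho_n$ the objective
\[
F(\rho_1,\ldots,\rho_n)=\sum_{i_1<\cdots<i_k}\Tr\!\big(H_{i_1,\ldots,i_k}\,\rho_{i_1}\otimes\cdots\otimes\rho_{i_k}\big)
\]
is a degree-$k$ polynomial that is multilinear in the single-qudit density matrices; writing each $\rho_j$ in a fixed Hermitian (generalized Pauli) basis makes $F$ a real multilinear polynomial in $O(d^2)$ variables per qudit, and since it is linear in each $\rho_j$ separately its maximum over product states is attained at a product of \emph{pure} states. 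Our task is thus to maximize this dense multilinear polynomial to additive error $\eps' n^k$.

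The central object is, for each qudit $j$, the \emph{effective single-qudit field}
\[
A_j \;=\; \sum_{S\ni j}\ \Tr_{S\setminus\{j\}}\!\Big(H_{S}\ \bigotimes_{i\in S\setminus\{j\}}\rho_i^\star\Big),
\]
a single-qudit operator (the functional derivative of $F$ in the $j$-th argument) at the unknown optimal assignment $\rho^\star=\bigotimes_j\rho_j^\star$. Two facts drive the algorithm: by coordinate-wise optimality $\rho_j^\star$ is a top eigenvector of $A_j$, and by Euler's identity for multilinear forms $\optprod=\frac1k\sum_j\Tr(A_j\rho_j^\star)=\frac1k\sum_j\lambda_{\max}(A_j)$. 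Each $A_j$ is a sum of $\binom{n-1}{k-1}=\Theta(n^{k-1})$ operators of norm at most $1$, so I would estimate it by \emph{random sampling}: draw a set $Q$ of $t=\poly(1/\eps)\,\log n$ qudits and approximate $A_j$ by the suitably rescaled partial sum over the $(k-1)$-subsets of $Q\setminus\{j\}$. Since this partial sum depends only on $\rho_i^\star$ for $i\in Q$, I would \emph{exhaustively enumerate} the states on $Q$ over a fixed $\eta$-net $\mathcal N$ of single-qudit pure states (of constant size $(1/\eta)^{O(d)}$). The number of guesses is $|\mathcal N|^{t}=n^{O(d\log(1/\eta)\cdot\poly(1/\eps))}$, polynomial for constants $d,\eta,\eps$; for the one guess matching $\rho^\star|_Q$ up to the net error $\eta$, a concentration bound for these bounded operator-valued sums yields estimates $\widehat A_j$ with $\snorm{\widehat A_j-A_j}\le\eps' n^{k-1}$ simultaneously for all $j$, after a union bound over the $n$ qudits.

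Given good estimates $\widehat A_j$ of the optimal fields, the remaining issue is to produce an assignment whose \emph{true} value is certified close to $\optprod$; simply setting each qudit to a top eigenvector of $\widehat A_j$ does not suffice, since the field of the resulting state need not resemble $\widehat A_j$. Following the linearization idea of~\cite{AKK99}, I would instead optimize the linear objective $\frac1k\sum_j\Tr(\widehat A_j\,\rho_j)$ over local density matrices $\rho_j$ subject to the \emph{consistency constraint} that the field induced by $\{\rho_j\}$ at each qudit reproduce $\widehat A_j$ up to $O(\eps' n^{k-1})$. The true optimum $\rho^\star$ is feasible, so the program's value is at least $\optprod-O(\eps'n^k)$, while the consistency constraint together with Euler's identity forces any feasible $\{\rho_j\}$ to have true value within $O(\eps'n^k)$ of this linear objective. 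For $k=2$ the induced field $\sum_{i\ne j}\Tr_i\!\big(H_{ij}(\cdot\otimes\rho_i)\big)$ is \emph{linear} in the $\rho_i$, so the program is a polynomial-size semidefinite program and the argument is clean. Since every guess yields an \emph{actual} product state whose value is computable exactly in time $O(\binom{n}{k}d^k)$, outputting the maximum value over all guesses is automatically a valid lower bound that never exceeds $\OPT$, while the correct guess certifies it is at least $(1/d^{k-1}-\eps)\OPT$.

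The main obstacle is the step from $k=2$ to general $k$. For $k\ge3$ the field induced by $\{\rho_j\}$ is itself a degree-$(k-1)$ expression in the variables, so the consistency constraint is \emph{non-convex} and the one-shot linearization breaks down. I expect the fix to require estimating the whole hierarchy of higher-order fields (the symmetric operator-valued derivative tensors of $F$ at $\rho^\star$) by nested sampling and linearizing the program level by level, as in the recursive version of~\cite{AKK99}. The crux is quantitative: one must show that the sampling and $\eta$-net errors, propagated through all $k-1$ levels of this recursion and through the correlated operator-valued $U$-statistics that arise, still contribute only an additive $O(\eps' n^k)$ to the value while keeping the sample size $t=O(\log n)$ (up to $\eps$-dependent factors) so that exhaustive enumeration remains polynomial.
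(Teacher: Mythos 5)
Your high-level route is the same as the paper's: combine Thm.~\ref{thm:approxprodstate} with an additive-error approximation scheme for optimizing over product states on dense instances (the paper's Thm.~\ref{thm:ptasdense}, proved via Thm.~\ref{thm:ptas}), obtained by adapting the exhaustive sampling method of~\cite{AKK99} --- guess the optimal assignment on $O(\log n)$ qudits over a constant-size net, estimate the induced single-qudit linear ``fields'' from that guess, and solve a linearized SDP whose extra constraints force consistency of the induced fields with the estimates; density then converts the additive error $\eps' n^k$ into the multiplicative loss $\eps$. For $k=2$ your argument is essentially complete and coincides with the paper's: your consistency-constrained SDP is exactly the paper's program $P_2$ (your operator estimates $\widehat A_j$ are, in the Pauli basis, the paper's estimates $a^i_k$ of the coefficients $c^i_k$), feasibility of the true optimum gives one direction of the value guarantee, and the consistency constraints give the other.

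The gap is that the theorem is claimed for every constant $k$, and for $k\geq 3$ you stop at the obstacle rather than resolve it: the consistency constraint is degree $k-1$ in the variables, and you write that you ``expect the fix to require'' nested sampling with level-by-level linearization and that ``the crux is quantitative.'' That quantitative crux is precisely the technical core of the paper's proof: the recursive estimator EVAL (Alg.~\ref{alg:estimate}) with its error recurrence $\epsilon_b\leq \cc\left(\sqrt{f/g}+\delta\right)+\epsilon_{b-1}\Delta$ (Lem.~\ref{l:evalbound}); the recursive procedure LINEARIZE (Alg.~\ref{alg:linearize}), which replaces each degree-$(b-1)$ sub-expression by estimated upper and lower bounds that become linear constraints; and the two lemmas showing (i) that the true optimum remains feasible for the resulting SDP with high probability (Lem.~\ref{l:feasible}) and (ii) that \emph{any} feasible point of that SDP has true value within the accumulated error of the linear objective (Lem.~\ref{l:guarantee}). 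Without carrying out this propagation --- including the union bound over the $d^2 n$ recursive estimates per level and the $\delta$-net error re-entering at every level --- the proposal proves the theorem only for $k=2$; this bookkeeping is not routine (it is where the paper spends its appendix), so leaving it as an expectation leaves the proof incomplete. Two secondary points: your one-shot estimate of $A_j$ by summing over $(k-1)$-subsets of a single sampled set $Q$ produces correlated operator-valued U-statistics, a complication the paper avoids entirely by recursing one index at a time so that only the scalar Sampling Lemma (Lem.~\ref{l:sample}) is ever invoked; and your algorithm as stated is Monte Carlo, whereas the paper additionally derandomizes the choice of sample set via walks on constant-degree expanders.
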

%In particular for the $2$-local problem on qubits our algorithm gives a $(\frac{1}{2}-\eps)$-approximation.
Thm.~\ref{thm:approxdense} follows immediately by combining Thm.~\ref{thm:approxprodstate} with the following
theorem, which gives an approximation scheme for the problem of optimizing over the set of product states.

\begin{theorem}\label{thm:ptasdense}
For all $\eps>0$ there is a polynomial time algorithm for dense MAX-$k$-local Hamiltonian that outputs a
product state assignment with value within $1-\eps$ of the value of the best product state assignment.
\end{theorem}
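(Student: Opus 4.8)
The plan is to reduce the problem of optimizing over product states to a \emph{classical} dense constraint satisfaction problem over a constant-size alphabet, and then apply (an adaptation of) the exhaustive-sampling PTAS of Arora et al.~\cite{AKK99}. The starting observation is that the value of a product state $\rho=\rho_1\otimes\cdots\otimes\rho_n$ is the multilinear expression $V(\rho_1,\ldots,\rho_n)=\sum_{i_1\le\cdots\le i_k}\Tr\big(H_{i_1,\ldots,i_k}\,\rho_{i_1}\otimes\cdots\otimes\rho_{i_k}\big)$, a degree-$k$ polynomial in the single-qudit states $\rho_i$, each of which ranges over the \emph{constant-dimensional} set of $d\times d$ density matrices.

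First I would discretize this continuous domain. Fix an $\eps'$-net $\mathcal N$ (in trace norm) over single-qudit pure states; since the state space has real dimension $2(d-1)$ with $d$ constant, such a net can be taken of constant size $N=(O(1/\eps'))^{2(d-1)}$. Writing $\OPT_{\rm prod}$ for the value of the best product state and replacing each factor $\rho_i^\ast$ of an optimal product state by its nearest net point, a telescoping argument (one factor at a time, using $\snorm{H_S}\le 1$ and multiplicativity of the trace norm under tensor products) shows that each of the $\binom nk$ terms changes by at most $k\eps'$, so the best \emph{net-restricted} product state has value at least $\OPT_{\rm prod}-O(\eps' n^k)$. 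Because the instance is dense, $\OPT=\Omega(n^k)$, and by Thm.~\ref{thm:approxprodstate} we have $\OPT_{\rm prod}\ge\OPT/d^{k-1}=\Omega(n^k)$; hence choosing $\eps'=\Theta(\eps)$ turns this additive loss into a $(1-\eps)$ multiplicative factor.

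Next I would observe that optimizing $V$ over net-restricted product states is \emph{exactly} a weighted MAX-$k$-CSP instance: each qudit $i$ becomes a variable taking one of $N$ values (the net points), and each term $H_S$ becomes a payoff $P_S\colon\mathcal N^{k}\to[0,1]$ given by $P_S(a_1,\ldots,a_k)=\Tr(H_S\,\ket{a_1}\!\bra{a_1}\otimes\cdots\otimes\ket{a_k}\!\bra{a_k})$. The estimate above shows this CSP is dense. I would then apply the exhaustive-sampling method of \cite{AKK99}: sample a set $T$ of $O(\log n/\eps^2)$ variables, enumerate all $N^{|T|}=\poly(n)$ net-assignments to $T$, and for the guess agreeing with the optimum use the sample to estimate, for every qudit $i$, the effective single-qudit operator $R_i=\sum_{S\ni i}\Tr_{S\setminus\{i\}}\!\big(H_S\,\bigotimes_{j\in S\setminus\{i\}}\rho_j\big)$; assigning each qudit the net point maximizing its estimated $R_i$ and rounding as in \cite{AKK99} yields a net-product state within $(1-\eps)$ of the CSP optimum. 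Composing the two $(1-\eps)$ factors gives a product state within $1-O(\eps)$ of $\OPT_{\rm prod}$, as required.

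The main obstacle is this last step. Arora et al.\ state their PTAS for Boolean alphabets, whereas here the alphabet $\mathcal N$ has constant but arbitrary size and the payoffs are arbitrary reals in $[0,1]$, so I would need to re-derive their analysis in this generality. Two points require care: (i) the concentration of the sampled estimates of $R_i$, which for $k>2$ are $U$-statistic--type averages over $(k-1)$-subsets of $T$ rather than sums of independent terms, so a sampling-without-replacement tail bound is needed to obtain additive error $\eps n^{k-1}$ simultaneously for all $n$ qudits (this is what forces the $\log n$ in $|T|$); and (ii) the rounding guarantee, where one must show that maximizing against the \emph{estimated} coefficients produces a genuinely high-value assignment, with all errors additive of order $\eps n^k$ — precisely the slack that density provides. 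I expect (ii) to be the delicate part, since it is where the self-consistency between the estimated coefficients and the final assignment must be controlled.
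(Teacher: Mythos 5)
Your overall route is genuinely different from the paper's, and the reduction half of it is sound: restricting to pure product states is without loss of generality (the objective is linear in each $\rho_i$), the telescoping/trace-norm argument correctly bounds the discretization loss by $O(\eps' n^k)$, density plus Thm.~\ref{thm:approxprodstate} keeps the net-restricted optimum at $\Omega(n^k)$, and the resulting problem is indeed a dense weighted MAX-$k$-CSP over a constant-size alphabet. The paper does something quite different: it never discretizes the variables it optimizes over. It keeps the $\rho_i$ as honest semidefinite-program variables, uses a $\delta$-net only on the $O(\log n)$ \emph{sampled} qudits, and recursively linearizes the degree-$k$ objective into an SDP with two-sided ``consistency'' constraints (Alg.~\ref{alg:linearize}, Lem.~\ref{l:feasible} and Lem.~\ref{l:guarantee}), solved by the ellipsoid method~\cite{GLS93}. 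Your reduction, if completed, would buy modularity: all quantum content is confined to the net argument, and the algorithmic core becomes a purely classical dense-CSP PTAS, which for constant alphabets is in fact known (e.g.\ via one-hot encoding into the smooth polynomial integer programs of~\cite{AKK99}, or from~\cite{AVKK02}); citing such a result as a black box would close your proof.

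The gap is in the step you sketch yourself: ``assigning each qudit the net point maximizing its estimated $R_i$'' is not the algorithm of~\cite{AKK99}, and greedy selection against estimated coefficients genuinely fails — your worry (ii) is not a delicate detail but a point where this version of the algorithm breaks. Concretely, take $d=2$, $k=2$, and $H_{ij}=\ketbra{01}{01}+\ketbra{10}{10}$ on every pair (a dense ``disagreement'' instance with $\optprod\approx n^2/4$). For the optimal balanced assignment, every effective operator satisfies $R_i\approx \frac{n}{2}I$: all net points are tied up to $O(1)$, so estimation error $\pm\eps n$ allows the greedy rule to assign \emph{every} qudit $\ketbra{0}{0}$, giving value $0$ — a loss of $\Omega(n^2)$ no matter how small $\eps$ is. What rescues~\cite{AKK99} (and the paper) is that the estimates are not used as rounding targets but are imposed as two-sided \emph{constraints} on the solution: the program only accepts assignments that themselves reproduce the estimated coefficients, and one then argues (i) the optimum is feasible with high probability and (ii) any feasible solution has true value close to the linearized objective — exactly Lem.~\ref{l:feasible} and Lem.~\ref{l:guarantee}. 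Note also that for $k>2$ the constraint ``$R_i(\rho)\approx$ estimate'' is itself of degree $k-1$, hence not linear; one needs the full recursive linearization (estimates at every level of the nesting, each level constrained), which is the entire content of the EVAL/LINEARIZE recursion and of its classical counterpart in~\cite{AKK99}. So either carry out that recursive constrained linearization over your alphabet $\mathcal{N}$, or invoke a known constant-alphabet dense MAX-$k$-CSP PTAS; as written, the greedy-plus-rounding step would not survive.
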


\paragraph{Proof ideas and new tools:}
The proofs of Thm.~\ref{thm:approxprodstate} and Thm.~\ref{thm:ptasdense} are independent and employ different
techniques. To show the product state approximation guarantee, we show a slightly stronger statement: For {\em any}
assignment $\ket{\Psi}$, there is a way to construct a product assignment of at least $d^{-k+1}$ its value. The proof is constructive (given $\ket{\Psi}$): we use a type of recursive Schmidt decomposition of $\ket{\Psi}$ to obtain a mixture
of product states whose value is off by at most the desired approximation factor (see Sec.~\ref{scn:prodratio}).

Our second result is technically more challenging and introduces a few new ideas to this problem, inspired by work of
Arora et al.~\cite{AKK99} in the classical setting. We illustrate the main ideas for MAX-$2$-local Hamiltonian on $n$
qubits. Recall that our goal is to find a PTAS for the local Hamiltonian problem {\em over product states}. The value
of the optimal product state assignment, $\optprod$, can be written
\begin{equation}
      \optprod\hspace{2mm} = \hspace{2mm}\max \quad\sum_{i=1}^n \sum_{j \in N(i)} \Tr (H_{i,j} (\rho_i \otimes \rho_j))
    \quad\mbox{s.t.}\quad \rho_i\succeq 0 \mbox{ and }\trace(\rho_i)=1\quad\mbox{for } 1\leq i\leq n,\label{eqn:introprogram}
\end{equation}
%\begin{align}
%    \optprod\hspace{2mm} = \hspace{2mm}\max \hspace{7mm}& \sum_{i=1}^n \sum_{j \in N(i)} \Tr (H_{i,j} (\rho_i \otimes \rho_j))\label{eqn:introprogram}\\
%    \mbox{s.t.} \hspace{7mm}& \rho_i\succeq 0 \hspace{16mm} \mbox{for } 1\leq i\leq n,\nonumber\\
%    &\trace(\rho_i)=1 \hspace{9mm} \mbox{for } 1\leq i\leq n,\nonumber
%\end{align}
where $N(i)$ is the set of indices $j$ for which a local Hamiltonian term $H_{i,j}$ is present. We might call this a
{\em quadratic} semidefinite program, as the maximization is quadratic in the $\rho_i$ (and as such not efficiently
solvable). Note, however, that if the terms in the maximization were {\em linear}, then we would obtain a semidefinite
program (SDP), which is efficiently solvable~\cite{GLS93}. To ``linearize'' our optimization, we use the ``exhaustive
sampling method" developed by Arora et al.~\cite{AKK99} (a method which was later key in many developments in property
testing, e.g.~\cite{GGR98}). We write each Hamiltonian term in a basis that separates its two qubits, for instance the
Pauli basis $\{\sigma_0, \sigma_1, \sigma_2, \sigma_3\}$, $H_{i,j}=\sum_{k,l=0}^3 \alpha^{ij}_{kl} \sigma_k \otimes
\sigma_l$. For $i=1,\ldots,n$ and $k=0,1,2,3$, define
\[
    c_k^i := \sum_{j \in N(i)} \sum_l \alpha_{kl}^{ij} \Tr (\sigma_l \rho_j).
\]
If we knew the values of $c_k^i$ for the optimal $\rho_i$, then solving the SDP below would yield the optimal $\rho_i$:
\begin{align}
    \max \quad \sum_{i=1}^n \sum_{k=0}^3 c_k^i \Tr (\sigma_k \rho_i)\quad
    \mbox{s.t.}\quad& \rho_i\succeq 0\mbox{ and } \trace(\rho_i)=1\hspace{11mm} \mbox{for } 1\leq i\leq n,\label{eqn:introdecomp}\\
    &\sum_{j \in N(i)} \sum_l \alpha_{kl}^{ij} \Tr (\sigma_l \rho_j)=c_k^i\hspace{5mm} \mbox{for } 1\leq i\leq n\mbox{ and } 0\leq k\leq 3.\nonumber
\end{align}
%\begin{align}
%    \max \hspace{7mm}& \sum_{i=1}^n \sum_k c_k^i \Tr (\sigma_k \rho_i)\label{eqn:introdecomp}\\
%    \mbox{s.t.} \hspace{7mm}& \rho_i\succeq 0 \hspace{41mm} \mbox{for } 1\leq i\leq n,\nonumber\\
%    &\trace(\rho_i)=1 \hspace{34mm} \mbox{for } 1\leq i\leq n,\nonumber\\
%    &\sum_{j \in N(i)} \sum_l \alpha_{kl}^{ij} \Tr (\sigma_l \rho_j)=c_k^i \hspace{10mm}\mbox{for } 1\leq i\leq n\mbox{ and } 0\leq k\leq 3.\nonumber
%\end{align}
Of course, this reasoning is circular, as in order to obtain the $c_k^i$ we need the optimal $\rho_i$. The crucial idea
is now to use {\em sampling} to {\em estimate} the $c_k^i$. More precisely, assume for a second that we could sample
$O(\log n)$ of the $\rho_i$ randomly from the optimal assignment. Then, by standard sampling bounds, with high
probability over the choice of the sampled qubits we can estimate all the $c_k^i$ to within an additive error $\pm \eps
n$ for some $\eps$. If we had these estimates $a_k^i$ for the $c_k^i$, we could solve the SDP above with the slight
modification that the last constraint should be $a_k^i -\eps n \leq \sum_{j \in N(i)} \sum_l \alpha_{kl}^{ij} \Tr
(\sigma_l \rho_j)\leq a_k^i +\eps n$. With high probability over the sampled qubits, this SDP will give a solution that
is within an additive $\eps n^2$ of the optimal one (more subtle technicalities and all calculations can be found in
Sec. \ref{scn:sepopt}). Moreover, it is possible to derandomize the sampling procedure to obtain a deterministic
algorithm (Sec.~\ref{sscn:finalptas}).

Of course, we are still in the realm of wishful thinking, because in order to sample from the optimal solution, we
would need to know it, which is precisely what we set out to do. However, the number of qubits we  wish to sample is
only \emph{logarithmic} in the input size. Thus, to simulate the sampling procedure, we can pick a random subset of
$O(\log n)$ qubits, and simply \emph{iterate} through all
possible assignments on them (with an appropriate $\delta$-net over the density matrices, which incurs a small
additional error) in polynomial time! Our algorithm then runs the SDP for each iteration, and we are guaranteed that at
least one iteration will return a solution within $\eps n^2$ of the optimal one. Because the denseness assumption
guarantees that $\optprod$ is $\Omega(n^2)$, our additive approximation turns into a factor $(1-\eps)$-approximation,
as desired. All details, the runtime of the algorithm and error bounds for the general $k$-local case on qudits
are given in Sec.~\ref{scn:sepopt}.

\paragraph{Previous and related work:}
We note that many heuristics have been developed in the physics community to
approximate properties of local Hamiltonian systems and this area is extremely important in the study of physical
systems~(e.g.~\cite{W92,W93,OR95,RO97,S05,PWKH98,CV09}). Our focus here is, however, on rigorous bounds on the
approximation guarantee of algorithms for the general problem. In this area, to our knowledge, few results are known. In
a first result on polynomial time approximation algorithms, Bansal, Bravyi and Terhal \cite{BBT09} give a PTAS for a
special class of the local Hamiltonian problem, so called quantum Ising spin glasses, for the case where the instance
is on a planar graph (and in particular of bounded degree). This PTAS is obtained by dividing the graph into constant
size chunks, which can be solved directly, and ignoring the constraints between chunks (this incurs an error
proportional to the number of such constraints, which is small because the graph is planar).
More recently, there has
been work proving rigorous approximations to ground states of one-dimensional quantum systems under well-defined
conditions using techniques such as density matrix renormalization group~\cite{AAI10,SC10}. To our
knowledge, we are the first to establish a bound on the approximation factor by optimizing over the set of product
states.

\paragraph{Discussion and open questions:}

Our two results give approximations to the local Hamiltonian problem. Although at first glance, our approximation ratio of $1/d^{k-1}$ may appear an incremental improvement over the trivial random assignment strategy, there are two important notes that should be kept in mind: The first is that many classical NP-hard problems, such as MAX-3-SAT (a special case of MAX-$k$-CSP where each constraint is the disjunction (``OR") of $k$ variables or their negation), are \emph{approximation resistant}~(e.g.~\cite{H07,AM08}), meaning that unless P$=$NP, there do not even exist non-trivial approximation ratios beyond the random assignment strategy. For example, for MAX-3-SAT it is NP-hard to do better than the approximation ratio of $7/8$ achieved by random assignment~\cite{Ha97}. Thus, showing the existence of a non-trivial approximation ratio is typically a big step in the classical setting. Moreover, it could have been conceivable that for \klh, analogously to MAX-3-SAT, outperforming the random assignment strategy would have been \emph{QMA-hard}. Yet our results show that unless NP$=$QMA, this is not the case. The second important note that should be kept in mind is that the currently \emph{best} approximation algorithm for MAX-$k$-CSP gives an approximation ratio of only about $0.44k/2^k$ for $k>2$~\cite{CMM07} (for $k=2$, one can achieve $0.874$~\cite{LLZ02}) and this is, moreover, essentially the best possible under a plausible complexity
theoretic conjecture (namely, the unique games conjecture~\cite{K02})~\cite{T98,H05,ST06,AM08}. This is to be contrasted with our $2/2^k$-approximation ratio for the case of $d=2$ (i.e. qubit systems), which we show can be achieved by product state assignments for \emph{arbitrary} (i.e. even non-dense) \klh~instances. This raises the important open question: is our approximation ratio tight?

Our product state approximation shows that approximating the local Hamiltonian problem to within $d^{-k+1}$
is in NP. It would be interesting to know if this approximation ratio could also be achieved in polynomial time. If
not, it might lead to an intriguing state of affairs where for low approximation ratios the problem is efficiently
solvable, for medium ratios it is in NP but not efficiently solvable, and for high ratios it is QMA-hard (assuming
a quantum PCP theorem exists).

To obtain our results for the case of dense local Hamiltonians, we have introduced the exhaustive sampling technique of
Arora et al.~\cite{AKK99} to the setting of low-degree semidefinite programs. We linearize such programs using
exhaustive sampling in combination with a careful analysis of the error coming from working with $\delta$-nets on
density matrices. We remark that it seems we cannot simply apply the results of~\cite{AKK99} for \emph{smooth
Polynomial Integer Programs} as a black-box to our setting. This is due to our
aforementioned need for a $\delta$-net, as well as the requirement that our assignment be a positive semidefinite
operator. We address the latter issue by extending the techniques of~\cite{AKK99} to the realm of positive semidefinite
programs by introducing the notion of ``degree-$k$ inner products'' over Hermitian operators to generalize the concept of degree-$k$ polynomials over real numbers, and performing the more complex analysis that ensues.
We hope that this technique will be of much wider applicability, particularly considering the growing use of semidefinite programs in numerous areas of quantum computing and information (e.g.~\cite{doherty04a,JJUW10,LMRS10}).

Another open question is whether similar ideas can be used to approximate other QMA-complete problems,
such as the consistency problem~\cite{L06}. Moreover, can we obtain polynomial time algorithms without the denseness
assumption? And are there special cases of the local Hamiltonian problem for which there is a PTAS (other than for planar
Ising spin glasses \cite{BBT09})? Of course, we do not expect a PTAS for all instances of the local
Hamiltonian problem, as this would contradict known hardness results for special classical cases of the problem.
However, perhaps there exist other classes of physically relevant instances of the problem for which a PTAS does exist.

\paragraph{Structure of this paper:} In Sec.~\ref{scn:prodratio}, we prove our result on product state approximations (Thm.~\ref{thm:approxguarantee} and the ensuing proof of Thm.~\ref{thm:approxprodstate}), show its tightness in the $2$-local case and provide the upper bound of $d^{-\lfloor k/2\rfloor}$ for the best possible
approximation by product states. Sec.~\ref{scn:sepopt} gives our polynomial time approximation algorithm and develops the
general sampling and SDP-based technique we use. It also shows that the dense local Hamiltonian problem remains
QMA-complete.

\paragraph{Notation:} We use $A\succeq 0$ to say operator $A$ is positive semidefinite, and denote by $L(\spa{X})$, $H(\spa{X})$, and $D(\spa{X})$ the sets of linear, Hermitian, and density operators acting on complex Euclidean space $\spa{X}$, respectively. We denote the Frobenius and operator norms of $A\in L(\spa{X})$ as $\fnorm{A}=\sqrt{\trace(A^\dagger A)}$ and $\snorm{A}=\max_{\ket{x}\in\spa{X}\mbox{ s.t. } \enorm{x}= 1}\enorm{A\ket{x}}$, respectively.

\section{Product states yield a $1/d^{k-1}$-approximation for qudits}\label{scn:prodratio}

We now show that product state assignments achieve a non-trivial approximation ratio for \klh~, i.e. Thm~\ref{thm:approxprodstate}. The heart of our approach is what we call the \emph{Mixing Lemma} (Lem.~\ref{l:mixinglemma}), which we use to prove Thm.~\ref{thm:approxguarantee}. Thm.~\ref{thm:approxprodstate} will then easily follow. At the end of the section, we discuss the tightness of the approximation guarantee given by Thm.~\ref{thm:approxprodstate}. We begin with two definitions.

\begin{definition}[Recursive Schmidt Decomposition (RSD)] \textup{We define the \emph{recursive Schmidt decomposition} of a state $\ket{\psi}\in(\complex^d)^{\otimes n}$ as the expression obtained by recursively applying the Schmidt decomposition on each qudit from $1$ to $n-1$ inclusive\footnote{This definition is relative to some fixed ordering of the qudits. The specific choice of ordering is unimportant in our scenario, as any decomposition output by such a process suffices to prove Thm.~\ref{thm:approxprodstate}.}. For example, the RSD for $3$-qubit $\ket{\psi}$ is
\begin{equation*}
    \ket{\psi} = \alpha_1 \ket{a_1}\otimes \left(\beta_1\ket{b_1}\ket{c_1}+\beta_2\ket{b_2}\ket{c_2}\right)+\alpha_2 \ket{a_2}\otimes (\beta^\prime_1\ket{{b^\prime}_1}\ket{{c^\prime}_1}+\beta^\prime_2\ket{{b^\prime}_2}\ket{{c^\prime}_2}),
\end{equation*}
for $\alpha_1^2+\alpha_2^2=\beta_1^2+\beta_2^2={\beta^\prime_1}^2+{\beta^\prime}_2^2=1$, $\set{\ket{a_i}}_i$ an orthonormal basis for qubit $1$, $\set{\ket{b_i}}_i$ and $\set{\ket{{b^\prime}_i}}_i$ orthonormal bases for qubit $2$, and $\set{\ket{c_i}}_i$ and $\set{\ket{{c^\prime}_i}}_i$ orthonormal bases for qubit $3$.}
\end{definition}

\begin{definition}[Schmidt cut]
    \textup{For any $\ket{\psi}\in(\complex^d)^{\otimes n}$ with Schmidt decomposition $\ket{\psi}=\sum_{i=1}^{d} \alpha_i \ket{w_i}\ket{v_i}$, where $\ket{w_i}\in\complex^d$ and $\ket{v_i}\in(\complex^d)^{\otimes n-1}$, and for any $\ket{\phi}\in(\complex^d)^{\otimes m}$, we refer to the expansion $\ket{\phi}\otimes\left(\sum_{i=1}^{d} \alpha_i \ket{w_i}\ket{v_i}\right)$ as the \emph{Schmidt cut} at qudit $m+1$. We say that a projector $\Pi$ \emph{crosses} this Schmidt cut if $\Pi$ acts on qudit $m+1$ and at least one qudit $i\in\set{m+2,\ldots,m+n}$.}
\end{definition}

The heart of our approach is the following Mixing Lemma, which provides, for \emph{any} assignment $\ket{\psi}\in(\complex^d)^{\otimes n}$, an explicit construction through which the entanglement across the first Schmidt cut of $\ket{\psi}$ can be eliminated, while maintaining at least a $(1/d)$-approximation ratio relative to the value $\ket{\psi}$ achieves against any local Hamiltonian $H\in H((\complex^d)^{\otimes n})$.

\begin{lemma}[Mixing Lemma]\label{l:mixinglemma}
Given state $\ket{\psi}$ on $n$ qudits with Schmidt cut on qudit $1$ given by $\ket{\psi} = \sum_{i=1}^{d} \alpha_i \ket{w_i}\ket{v_i}$, where $\ket{w_i}\in\complex^d$ and $\ket{v_i}\in(\complex^d)^{\otimes n-1}$, define $\rho := \sum_{i=1}^{d}\alpha_i^2\ketbra{w_i}{w_i}\otimes\ketbra{v_i}{v_i}$. Then, given projector $\Pi$ acting on some subset $\mathcal{S}$ of the qudits, if $\Pi$ crosses the Schmidt cut, then $\trace(\Pi\rho)\geq\frac{1}{d}\trace(\Pi\ketbra{\psi}{\psi})$. Otherwise, $\trace(\Pi\rho)=\trace(\Pi\ketbra{\psi}{\psi})$.
\end{lemma}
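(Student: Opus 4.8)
The plan is to use the fact that $\rho$ is exactly the state obtained from $\ketbra{\psi}{\psi}$ by erasing the off-diagonal cross terms ($i\neq j$) of its Schmidt expansion across the first cut; equivalently, $\rho$ is the dephasing of $\ketbra{\psi}{\psi}$ in the orthonormal basis $\set{\ket{w_i}}$ on qudit~$1$. The first step I would record is that $\rho$ and $\ketbra{\psi}{\psi}$ then have \emph{identical reduced states on each side of the cut}: writing $\ketbra{\psi}{\psi}=\sum_{i,j}\alpha_i\alpha_j\,\ketbra{w_i}{w_j}\otimes\ketbra{v_i}{v_j}$ and tracing out qudits $2,\ldots,n$, the orthonormality $\braket{v_j}{v_i}=\delta_{ij}$ kills the cross terms and leaves $\sum_i\alpha_i^2\ketbra{w_i}{w_i}$, which equals $\trace_{2,\ldots,n}\rho$; tracing out qudit~$1$ instead leaves $\sum_i\alpha_i^2\ketbra{v_i}{v_i}$ for both states. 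This immediately settles the non-crossing case: by definition a projector not crossing the cut is supported either on qudit~$1$ alone or entirely within $\set{2,\ldots,n}$, so $\trace(\Pi\,\cdot)$ depends only on the reduced state of its argument on one side of the cut, and since those reduced states agree we get $\trace(\Pi\rho)=\trace(\Pi\ketbra{\psi}{\psi})$ exactly.

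For the crossing case I would obtain the factor $1/d$ from a Cauchy--Schwarz estimate over the (at most) $d$ Schmidt terms. Decompose $\Pi=\sum_k\ketbra{\pi_k}{\pi_k}$ into orthonormal rank-one projectors and let $c_{ki}$ denote the overlap of $\ket{\pi_k}$ with the product vector $\ket{w_i}\otimes\ket{v_i}$. Then $\trace(\Pi\ketbra{\psi}{\psi})=\sum_k\abs{\sum_i\alpha_i c_{ki}}^2$ and $\trace(\Pi\rho)=\sum_k\sum_i\alpha_i^2\abs{c_{ki}}^2$, and pairing the all-ones vector of length $d$ against $(\alpha_i c_{ki})_i$ in Cauchy--Schwarz gives $\abs{\sum_i\alpha_i c_{ki}}^2\leq d\sum_i\alpha_i^2\abs{c_{ki}}^2$; summing over $k$ yields $\trace(\Pi\ketbra{\psi}{\psi})\leq d\,\trace(\Pi\rho)$, as desired.

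The argument is short, and the one genuinely quantitative point I would watch is that the loss factor equals the number of Schmidt terms across a \emph{single}-qudit cut, hence is at most the local dimension $d$; this is precisely what makes the bound $1/d$ rather than something weaker. The same estimate can be packaged as a $d$-element twirl $\rho=\frac1d\sum_{m=0}^{d-1}(V_m\otimes \Id)\ketbra{\psi}{\psi}(V_m^\dagger\otimes \Id)$ with $V_m=\sum_i\omega^{mi}\ketbra{w_i}{w_i}$ for $\omega$ a primitive $d$-th root of unity, after which one simply drops all but the $m=0$ term using $\Pi\succeq 0$; this reframing makes clear that the crossing/non-crossing dichotomy only governs whether the discarded terms actually vanish. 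I do not anticipate any serious obstacle beyond the careful bookkeeping of the marginals in the first step.
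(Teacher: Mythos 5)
Your proof is correct, and its core is essentially the paper's own argument. The non-crossing case is handled identically (the dephased state $\rho$ and $\ketbra{\psi}{\psi}$ have the same marginals on both sides of the cut), and your Cauchy--Schwarz step is the paper's inequality in different clothing: the paper writes $\trace(\Pi\ketbra{\psi}{\psi})=\enorm{\sum_i\ket{a_i}}^2$ for $\ket{a_i}:=\alpha_i\Pi\ket{w_i}\ket{v_i}$ (using $\Pi^2=\Pi$) and bounds the cross terms pairwise by $\braket{a_i}{a_j}+\braket{a_j}{a_i}\leq\enorm{\ket{a_i}}^2+\enorm{\ket{a_j}}^2$, which summed over the $\binom{d}{2}$ pairs yields the same factor of $d$ that you get by pairing the all-ones vector against $(\alpha_i c_{ki})_i$ in an eigenbasis of $\Pi$; both are the elementary fact that a coherent sum of $d$ vectors can beat the incoherent mixture by at most a factor $d$. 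The one genuinely different element is your closing twirl remark: writing $\rho=\frac{1}{d}\sum_{m=0}^{d-1}(V_m\otimes \Id)\ketbra{\psi}{\psi}(V_m^{\dagger}\otimes \Id)$ and discarding all but the $m=0$ term replaces Cauchy--Schwarz by pure positivity, makes the origin of the $1/d$ transparent (it is the weight of the identity in a $d$-element group average), and extends verbatim to any $\Pi\succeq 0$ rather than just projectors. That packaging is arguably cleaner than either your main argument or the paper's, and would be worth promoting from an aside to the proof itself.
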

\begin{proof}
    Case $2$ follows easily by noting that the given Schmidt decomposition of $\ket{\psi}$ implies $\trace_1(\rho)=\trace_1(\ketbra{\psi}{\psi})$ and $\trace_{2,\ldots,n}(\rho)=\trace_{2,\ldots,n}(\ketbra{\psi}{\psi})$. To prove case $1$, we observe by straightforward expansion that
    \begin{equation}
        \trace(\Pi\ketbra{\psi}{\psi}) = \trace(\Pi\rho) + \sum_{i< j}\alpha_i\alpha_j\bra{w_i}\bra{v_i}\Pi\ket{w_j}\ket{v_j} + \alpha_i\alpha_j\bra{w_j}\bra{v_j}\Pi\ket{w_i}\ket{v_i}.\label{eqn:mixlem1}
    \end{equation}
    Then, by defining for each $i$ vector $\ket{a_{i}} := \alpha_i\Pi\ket{w_i}\ket{v_i}$, we have that
    \begin{equation*}
        \sum_{i< j}\alpha_i\alpha_j\bra{w_i}\bra{v_i}\Pi\ket{w_j}\ket{v_j} + \alpha_i\alpha_j\bra{w_j}\bra{v_j}\Pi\ket{w_i}\ket{v_i}= \sum_{i<j}\braket{a_{i}}{a_{j}}+\braket{a_{j}}{a_{i}}.
    \end{equation*}
    Applying the facts that $\Pi^2=\Pi$ and $\braket{a}{b} + \braket{b}{a} \leq \enorm{\ket{a}}^2 + \enorm{\ket{b}}^2$ for $\ket{a},\ket{b}\in(\complex^d)^{\otimes n}$ implies
    \begin{equation*}
        \sum_{i<j}\braket{a_{i}}{a_{j}}+\braket{a_{j}}{a_{i}}\leq\sum_{i< j}\enorm{\ket{a_{i}}}^2 + \enorm{\ket{a_{j}}}^2
        =(d-1)\sum_{i}\alpha_i^2\bra{w_i}\bra{v_i}\Pi\ket{w_i}\ket{v_i}
        =(d-1)\trace(\Pi\rho),
    \end{equation*}
    from which the claim follows.
\end{proof}

The following simple extension of Lem.~\ref{l:mixinglemma} simplifies our proof of Thm.~\ref{thm:approxguarantee}.

\begin{corollary}\label{cor:mixinglemma}
    Define $\ket{\psi^\prime}:=\ket{\phi}\otimes \ket{\psi}$, where $\ket{\phi}\in(\complex^d)^{\otimes m}$ for $m>0$ and $\ket{\psi}$ is defined as in Lem.~\ref{l:mixinglemma}, and let $\rho\in D(\complex^d)^{\otimes n}$ be obtained from $\ket{\psi}$ as in Lem.~\ref{l:mixinglemma}. Then, for any projector $\Pi$ acting on a subset $\mathcal{S}$ of the qudits, if $\Pi$ crosses the Schmidt cut of $\ket{\psi^\prime}$ at qudit $m+1$, we have  $\trace(\Pi\ketbra{\phi}{\phi}\otimes\rho)\geq\frac{1}{d}\trace(\Pi\ketbra{\psi^\prime}{\psi^\prime})$. Otherwise, $\trace(\Pi\ketbra{\phi}{\phi}\otimes\rho)=\trace(\Pi\ketbra{\psi^\prime}{\psi^\prime})$.
\end{corollary}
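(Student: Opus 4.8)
The plan is to reduce Cor.~\ref{cor:mixinglemma} to the proof of Lem.~\ref{l:mixinglemma} by absorbing the product prefix $\ket{\phi}$ into the Schmidt vectors. First I would rewrite
\[
    \ket{\psi^\prime} = \ket{\phi}\otimes\ket{\psi} = \sum_{i=1}^{d} \alpha_i\,\ket{W_i}\otimes\ket{v_i}, \qquad \ket{W_i} := \ket{\phi}\otimes\ket{w_i},
\]
and observe that the $\ket{W_i}$ are orthonormal, since $\braket{W_i}{W_j} = \braket{\phi}{\phi}\braket{w_i}{w_j} = \delta_{ij}$. Hence this is a genuine Schmidt decomposition of $\ket{\psi^\prime}$ across the bipartition $\set{1,\ldots,m+1}$ versus $\set{m+2,\ldots,m+n}$, and crucially its Schmidt rank is still at most $d$, as prepending the fixed state $\ket{\phi}$ cannot increase it. A direct expansion then shows that the state in the statement is exactly the one obtained from this decomposition in the manner of Lem.~\ref{l:mixinglemma}:
\[
    \sum_{i=1}^{d} \alpha_i^2\,\ketbra{W_i}{W_i}\otimes\ketbra{v_i}{v_i} = \ketbra{\phi}{\phi}\otimes\rho.
\]

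With this identification, the bound for the crossing case is immediate: repeating the computation in the proof of Lem.~\ref{l:mixinglemma} verbatim with $\ket{W_i}$ in place of $\ket{w_i}$ and $\ket{a_i} := \alpha_i\Pi\ket{W_i}\ket{v_i}$, the off-diagonal terms are bounded by $(d-1)\trace(\Pi\,\ketbra{\phi}{\phi}\otimes\rho)$, using $\Pi^2=\Pi$, the inequality $\braket{a}{b}+\braket{b}{a}\le\enorm{\ket{a}}^2+\enorm{\ket{b}}^2$, and the Schmidt-rank bound $\le d$. This yields $\trace(\Pi\,\ketbra{\phi}{\phi}\otimes\rho)\ge\frac{1}{d}\trace(\Pi\ketbra{\psi^\prime}{\psi^\prime})$. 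In fact the inequality holds for every projector, so nothing about crossing is actually needed here.

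For the non-crossing case I would instead show that all off-diagonal terms $\alpha_i\alpha_j\bra{W_i}\bra{v_i}\Pi\ket{W_j}\ket{v_j}$ with $i\ne j$ vanish, so that the same expansion collapses to an equality. The non-crossing hypothesis splits into two subcases, and each kills the off-diagonal coherences through a different orthogonality relation: if $\Pi$ does not act on qudit $m+1$, it acts as the identity there and the term factors out $\braket{w_i}{w_j}=\delta_{ij}$; if instead $\Pi$ acts on no qudit in $\set{m+2,\ldots,m+n}$, it acts as the identity on the $\ket{v}$ registers and the term factors out $\braket{v_i}{v_j}=\delta_{ij}$. Either way the cross terms disappear and the desired equality follows.

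The point that needs the most care -- and the reason one cannot simply invoke Lem.~\ref{l:mixinglemma} as a black box on the grouped first $m+1$ qudits -- is that the corollary's notion of \emph{crossing} is anchored specifically at qudit $m+1$, rather than at the entire $\set{1,\ldots,m+1}$ versus $\set{m+2,\ldots}$ cut. A projector could straddle this cut (say, acting on a qudit in $\set{1,\ldots,m}$ and one in $\set{m+2,\ldots}$) while still counting as non-crossing in the corollary's sense. The resolution is precisely the two-subcase orthogonality argument above: the prefix $\ket{\phi}$ is common to every Schmidt branch, so coherences that avoid qudit $m+1$ never mix distinct branches. I would therefore take care to verify the vanishing of the off-diagonal terms under exactly the corollary's qudit-$(m+1)$-anchored non-crossing condition, rather than under the coarser bipartition-crossing condition suggested by the grouped decomposition.
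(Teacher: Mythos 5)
Your proposal is correct and follows essentially the same route as the paper: absorbing $\ket{\phi}$ into the Schmidt vectors and rerunning the Mixing Lemma computation with $\ket{a_i}:=\alpha_i\Pi\ket{\phi}\ket{w_i}\ket{v_i}$ is exactly the paper's modification (1), and your two non-crossing subcases (identity on qudit $m+1$, or identity on $\set{m+2,\ldots,m+n}$) are exactly its modification (2). The only cosmetic difference is that you establish the equality case by directly killing the off-diagonal terms via $\braket{w_i}{w_j}=\delta_{ij}$ or $\braket{v_i}{v_j}=\delta_{ij}$, whereas the paper phrases the same orthogonality fact as equality of the partial traces $\trace_{m+1}(\ketbra{\phi}{\phi}\otimes\rho)=\trace_{m+1}(\ketbra{\psi^\prime}{\psi^\prime})$.
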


\begin{proof}
     Immediate by applying the proof of Lem.~\ref{l:mixinglemma} with the following modifications: (1) Define $\ket{a_{i}} := \alpha_i\Pi\ket{\phi}\ket{w_i}\ket{v_i}$, and (2) if $\mathcal{S}\subseteq\set{1,\ldots,m}\cup\set{m+2,\ldots,m+n}$ (i.e. this is one of two ways for $\Pi$ not to cross the cut --- the other way is for $\mathcal{S}\subseteq \set{1,\ldots,m+1}$), observe that by the same arguments as in Lem.~\ref{l:mixinglemma} for case $2$ and the product structure between $\ket{\phi}$ and $\ket{\psi}$ in $\ket{\psi^\prime}$ that $\trace_{m+1}(\ketbra{\phi}{\phi}\otimes\rho)=\trace_{m+1}(\ketbra{\psi^\prime}{\psi^\prime})$.
\end{proof}

Lemma~\ref{l:mixinglemma} shows that the state $\rho$ obtained by \emph{mixing} the $d$ Schmidt vectors of $\ket{\psi}$, as opposed to taking their \emph{superposition}, suffices to achieve a $(1/d)$-approximation across the first Schmidt cut. By iterating this argument over \emph{all} $n-1$ Schmidt cuts, we now prove that a mixture of all (product) states appearing in the RSD of $\ket{\psi}$ achieves an approximation ratio of $1/d^{k-1}$.
\begin{theorem}\label{thm:approxguarantee}
    For any $n$-qudit assignment $\ket{\psi}$ with RSD $\ket{\psi}=\sum_{i=1}^{d^{n-1}}\sqrt{p_i}\ket{\phi_i}$, where $\sum_i p_i = 1$ and $\set{\ket{\phi_i}}_{i=1}^{d^{n-1}}$ is a set of orthonormal product vectors in $(\complex^d)^{\otimes n}$, define $\rho := \sum_{i=1}^{d^{n-1}} p_i \ketbra{\phi_i}{\phi_i}$. Then, for any projector $\Pi$ acting on some subset $\mathcal{S}\subseteq\set{1,\ldots,n}$ of qudits with $\abs{\mathcal{S}}= k$, we have
%    \begin{equation*}
$        \trace(\Pi\rho)\geq\frac{1}{d^{k-1}}\trace(\Pi\ketbra{\psi}{\psi})$.
%    \end{equation*}
\end{theorem}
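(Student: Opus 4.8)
The plan is to obtain the $1/d^{k-1}$ ratio by peeling off the $n-1$ Schmidt cuts of the RSD one at a time, applying Corollary~\ref{cor:mixinglemma} at each cut, and then counting exactly how many of these cuts the projector $\Pi$ actually crosses. Concretely, I would introduce a sequence of mixed states $\sigma_0=\ketbra{\psi}{\psi},\sigma_1,\ldots,\sigma_{n-1}$, where $\sigma_j$ is obtained from $\sigma_{j-1}$ by replacing, in each term, the superposition across the Schmidt cut at qudit $j$ with the corresponding mixture of that term's Schmidt vectors. Each $\sigma_{j-1}$ is a convex combination of terms of the form $\ketbra{\phi}{\phi}\otimes\ketbra{v}{v}$, where $\ket{\phi}$ is a product state on qudits $1,\ldots,j-1$ and $\ket{v}$ lives on qudits $j,\ldots,n$; applying the Schmidt-then-mix step to the $\ket{v}$ factor is exactly the situation of Corollary~\ref{cor:mixinglemma} with $m=j-1$ and cut at qudit $m+1=j$. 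The first thing to verify is that unrolling this process to the end returns $\rho$: the product of squared Schmidt coefficients accumulated along a branch is precisely a weight $p_i$, and the tensored Schmidt vectors form exactly the product vector $\ket{\phi_i}$, so $\sigma_{n-1}=\sum_i p_i\ketbra{\phi_i}{\phi_i}=\rho$.

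Next I would run the induction on $\trace(\Pi\sigma_j)$. Because $\sigma_{j-1}$ is a convex combination with non-negative weights and the trace is linear, I can invoke Corollary~\ref{cor:mixinglemma} on each term separately and sum the resulting inequalities: if $\Pi$ crosses the cut at qudit $j$ then $\trace(\Pi\sigma_j)\geq\frac{1}{d}\trace(\Pi\sigma_{j-1})$, and otherwise $\trace(\Pi\sigma_j)=\trace(\Pi\sigma_{j-1})$. Chaining these $n-1$ relations yields $\trace(\Pi\rho)\geq d^{-c}\,\trace(\Pi\ketbra{\psi}{\psi})$, where $c$ is the number of cuts that $\Pi$ crosses.

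The crux of the argument, and the only place where the exponent $k-1$ appears rather than $n-1$, is the counting step. Writing $\mathcal{S}=\set{s_1<\cdots<s_k}$, the cut at qudit $j$ is crossed precisely when $\Pi$ acts on qudit $j$ and on at least one qudit in $\set{j+1,\ldots,n}$, i.e. exactly when $j\in\mathcal{S}$ and $j<s_k$. These indices are exactly $s_1,\ldots,s_{k-1}$, so $\Pi$ crosses exactly $k-1$ cuts; hence $c=k-1$ and the claimed bound follows. (Note that the cut at $s_k=\max\mathcal{S}$ is never crossed, since no element of $\mathcal{S}$ lies beyond it, and cuts at positions outside $\mathcal{S}$ are untouched by $\Pi$.)

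I expect the main obstacle to be bookkeeping rather than conceptual. The delicate points are: ensuring the term-by-term application of the corollary is legitimate, since the intermediate $\sigma_j$ are genuine mixtures rather than pure states and so I must appeal to linearity of the trace over the convex decomposition; and being careful that ``crosses the cut at qudit $j$'' is always interpreted with respect to the $\ket{\phi}\otimes\ket{v}$ splitting used by the corollary, so that the crossing condition reduces cleanly to ``$j\in\mathcal{S}$ and $\max\mathcal{S}>j$.'' Everything else is a direct iteration of the Mixing Lemma along the nested Schmidt structure, and notably the argument never needs the orthonormality of the $\ket{\phi_i}$ beyond what the Schmidt decompositions already supply.
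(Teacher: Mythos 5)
Your proposal is correct and takes essentially the same route as the paper's own proof: it steps through the $n-1$ Schmidt cuts of the RSD, applies Corollary~\ref{cor:mixinglemma} (with $m=j-1$ at cut $j$) term-by-term to the intermediate mixtures, identifies the final mixture with $\rho$, and counts that $\Pi$ crosses exactly $k-1$ cuts. Your explicit counting argument (crossings occur precisely at $s_1,\ldots,s_{k-1}$) simply spells out the paper's observation that $\onorm{\ve{c}}=k-1$.
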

\begin{proof}
 Let $\Pi$ be a projector with $\abs{\mathcal{S}}= k$, and define $\ve{c}\in\set{0,1}^{n-1}$ such that $\ve{c}(j)=1$ iff $\Pi$ crosses the Schmidt cut at qudit $j$. For example, if $\Pi$ acts on qudits $\set{1,2}$, then $\ve{c}=(1,0,\ldots,0)$. Note that in general $\onorm{\ve{c}}=k-1$. We proceed by iteratively stepping through each Schmidt cut in the RSD of $\ket{\psi}$. Let $\rho^{(0)}:=\ketbra{\psi}{\psi}$, and consider first the cut at qudit $1$, i.e. $\ket{\psi} = \sum_{i=1}^{d} \alpha_i \ket{w_i}\ket{v_i}$, for $\ket{w_i}\in\complex^d$ and $\ket{v_i}\in(\complex^d)^{\otimes n-1}$. Defining $\rho^{(1)} := \sum_{i=1}^{d}\alpha_i^2\ketbra{w_i}{w_i}\otimes\ketbra{v_i}{v_i}$, we have by Lem.~\ref{l:mixinglemma} that
\begin{equation}
   \trace(\Pi\ketbra{\psi}{\psi})\leq d^{\ve{c}(1)} \trace(\Pi\rho^{(1)}),\label{eqn:recurse1}
\end{equation}
i.e. we lose a factor of $1/d$ iff $\Pi$ crosses the first cut.

Moving on to the second Schmidt cut, consider the state $\ket{w_1}\ket{v_1}\in\complex^d\otimes(\complex^d)^{\otimes n-1}$ appearing in the expression for $\rho^{(1)}$. Observe that it satisfies the preconditions for Cor.~\ref{cor:mixinglemma} with $m=1$. Hence, via Cor.~\ref{cor:mixinglemma} there exists a state $\sigma_1$ acting on qudits $\set{2,\ldots,n}$ such that $\trace(\Pi\ketbra{w_1}{w_1}\otimes\ketbra{v_1}{v_1})\leq d^{\ve{c}(2)}\trace(\Pi\ketbra{w_1}{w_1}\otimes\sigma_1)$. We can analogously find states $\sigma_i$ corresponding to $\ket{w_i}\ket{v_i}$ for all $1\leq i\leq d$. Thus,
\begin{equation}
    \trace(\Pi\rho^{(1)}) = \sum_{i=1}^d\alpha_i^2\trace(\Pi\ketbra{w_i}{w_i}\otimes\ketbra{v_i}{v_i})
                    \leq d^{\ve{c}(2)}\left[\sum_{i=1}^d\alpha_i^2\trace(\Pi\ketbra{w_i}{w_i}\otimes\sigma_i) \right].\label{eqn:iter2}
\end{equation}
Hence, by defining $\rho^{(2)} := \sum_{i=1}^d\alpha_i^2\ketbra{w_i}{w_i}\otimes\sigma_i$, we have via Eqns.~(\ref{eqn:recurse1}) and~(\ref{eqn:iter2}) that
\begin{equation*}
    \trace(\Pi\ketbra{\psi}{\psi})\leq d^{\ve{c}(1)+\ve{c}(2)}\trace(\Pi\rho^{(2)}).
\end{equation*}
Since by Cor.~\ref{cor:mixinglemma}, the $\sigma_i$ are mixtures of Schmidt vectors from the second Schmidt cut, we can now iteratively apply the same procedure to the (at most $d^2$) pure states appearing in the expression for $\rho^{(2)}$ when considering the third Schmidt cut. Note in particular that each of these terms will have a product structure between qudits $\set{1,2}$ and $\set{3,\ldots,n}$, as required by Cor.~\ref{cor:mixinglemma} for the next iteration.

More generally, when considering the $p$th Schmidt cut, we apply Cor.~\ref{cor:mixinglemma} with $m=p-1$ to each of the at most $d^{p-1}$ terms appearing in the expansion of $\rho^{(p-1)}$. We continue iterating in this fashion until we have exhausted all $n-1$ Schmidt cuts, at which point the resulting mixture $\rho^{(n-1)}$ we are left with is in fact the $\rho$ from the statement of the claim (seen by noting that our procedure essentially iteratively computes the RSD of $\ket{\psi}$, mixing the Schmidt vectors it computes at each step). Moreover, due to the repeated application of Cor.~\ref{cor:mixinglemma}, we have
\begin{equation}
    \trace(\Pi\ketbra{\psi}{\psi})\leq d^{\onorm{\ve{c}}}\trace(\Pi\rho^{(n-1)}).
\end{equation}
Recalling that $\onorm{\ve{c}}=k-1$ completes the proof.
\end{proof}
\paragraph{Proof of Thm.~\ref{thm:approxprodstate}:} Simply apply Thm.~\ref{thm:approxguarantee} to each projector in the spectral decompositions of each (positive semidefinite) $H_i$ in our \klh~instance $H=\sum_i H_i$, and let $\ket{\psi}$ denote the optimal assignment for $H$. It is important to note that we can exploit Thm.~\ref{thm:approxguarantee} in this fashion due to the fact that the $\rho$ constructed by Thm.~\ref{thm:approxguarantee} is \emph{independent} of the projector $\Pi$ --- i.e. for any fixed $\ket{\psi}$ and $k$, the state $\rho$ provides the same approximation ratio against \emph{any} $k$-local projector $\Pi$ encountered in the spectral decompositions of the $H_i$. Finally, note that one can find a \emph{pure} product state achieving this approximation guarantee since $\rho$ is a convex mixture of pure product states.

\paragraph{Upper bound of $d^{-\lfloor \frac{k}{2}\rfloor}$ for product state approximations.}
Is the result of Thm.~\ref{thm:approxprodstate} tight? In the case of MAX-$2$-local Hamiltonian on qudits, yes --- consider a single clause projecting onto the maximally entangled state $\frac{1}{\sqrt{d}}\sum_i \ket{ii}$, for which a product state achieves value at most $1/d$. On the other hand, for MAX-$3$-local Hamiltonian on qubits, the worst case clause for a $3$-qubit product state assignment is the projector onto the state $\ket{W}=\frac{1}{\sqrt{3}}(\ket{001}+\ket{010}+\ket{100})$~\cite{TWP09}. But here product states achieve value $4/9$~\cite{WG03}, implying the bound of $1/4$ from Thm.~\ref{thm:approxprodstate} is not tight.

A simple construction shows that the true optimal ratio is upper bounded by $d^{-\lfloor\frac{k}{2}\rfloor}$. To see this, consider a single clause which
is the tensor product of maximally entangled bipartite states\footnote{For odd $k$, we assume the odd qudit out projects onto the identity.}. For example, for $n=4$, consider the clause
$\ketbra{\phi^+}{\phi^+}\otimes\ketbra{\phi^+}{\phi^+}$, where
$\ket{\phi^+}=\frac{1}{\sqrt{2}}(\ket{00}+\ket{11})$. The maximum value
a product state can attain is $1/4$, as claimed. In the qubit setting ($d=2$), one can further improve this construction for odd $k$ by replacing the term $\ketbra{\phi^+}{\phi^+}\otimes I$ on the last three qubits with $\ketbra{W}{W}$. For example, for $k=5$, setting our instance to be the clause $\ketbra{\phi^+}{\phi^+}\otimes\ketbra{W}{W}$ yields an upper bound of $(1/2)(4/9)=2/9<1/4=d^{-\lfloor\frac{k}{2}\rfloor}$ (where we again use the value $4/9$ for $\ket{W}$ from the previous paragraph). For general odd $k>1$, this improved bound generalizes to $2^{\frac{-k+7}{2}}/9$.

\section{Optimizing over the set of separable quantum states}\label{scn:sepopt}
Section~\ref{scn:prodratio} showed that there always \emph{exists} a product state assignment achieving a certain non-trivial approximation ratio. In this section, we show how to efficiently \emph{find} such a product state. Our main theorem of this section is the following (Thm.~\ref{thm:ptas}), from which Thm.~\ref{thm:ptasdense} follows easily (see discussion at end of Sec.~\ref{sscn:finalptas}). %Combining Thm.~\ref{thm:ptasdense} with Thm.~\ref{thm:approxprodstate} then immediately implies Thm.~\ref{thm:approxdense}.

\begin{theorem}\label{thm:ptas}
    Let $H$ be an instance of \klh~acting on $n$ qudits, and let $\optprod$ denote the optimum value of $\trace(H\rho)$ over all \emph{product} states $\rho\in D((\complex^d)^{\otimes n})$. Then, for any fixed $\epsilon>0$, there exists a polynomial time (deterministic) algorithm which outputs $\rho_1\otimes\cdots\otimes\rho_n\in D((\complex^d)^{\otimes n})$ such that
$
        \trace(H\rho_1\otimes\cdots\otimes\rho_n)\geq \optprod -\epsilon n^k.
$
\end{theorem}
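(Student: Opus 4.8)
The plan is to turn the optimization of $\trace(H\rho_1\otimes\cdots\otimes\rho_n)$ over product states into a polynomial-size family of semidefinite programs via the exhaustive sampling method, generalizing Arora et al.~\cite{AKK99} to the setting of Hermitian operators. First I would fix an orthonormal basis $\set{\sigma_0=\Id,\sigma_1,\ldots,\sigma_{d^2-1}}$ of $H(\complex^d)$ (the qudit analogue of the Pauli basis) and expand each local term as $H_{i_1,\ldots,i_k}=\sum_{\ell_1,\ldots,\ell_k}\alpha_{\ell_1\cdots\ell_k}\,\sigma_{\ell_1}\otimes\cdots\otimes\sigma_{\ell_k}$. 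For a product state this yields $\trace(H\rho_1\otimes\cdots\otimes\rho_n)=\sum_{\text{terms}}\sum_{\ell_1,\ldots,\ell_k}\alpha_{\ell_1\cdots\ell_k}\prod_{m}\trace(\sigma_{\ell_m}\rho_{i_m})$, a degree-$k$ expression in the single-qudit density operators --- this is the ``degree-$k$ inner product over Hermitian operators'' alluded to in the introduction. Holding all but $\rho_i$ fixed, the objective is linear in $\rho_i$ with Hermitian coefficient operator $C_i:=\sum_{\text{terms}\ni i}\trace_{\text{rest}}\big(H_{\text{term}}\bigotimes_{j\ne i}\rho_j\big)$, so that $\trace(H\rho_1\otimes\cdots\otimes\rho_n)=\tfrac1k\sum_{i=1}^n\trace(C_i\rho_i)$ (each term contributing to $k$ of the $C_i$), where every $C_i$ is a sum of $O(n^{k-1})$ operators of norm at most $1$ and depends, multilinearly of degree $k-1$, only on the $\rho_j$ with $j\ne i$.

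Second, let $\rho^\ast=\rho_1^\ast\otimes\cdots\otimes\rho_n^\ast$ attain $\optprod$ and write $C_i^\ast:=C_i(\rho^\ast)$. If the $C_i^\ast$ were known, then the SDP $\max\tfrac1k\sum_i\trace(C_i^\ast\rho_i)$ subject to $\rho_i\succeq 0$ and $\trace(\rho_i)=1$, together with a consistency constraint keeping the coefficients induced by $\rho$ within $\epsilon n^{k-1}$ of $C_i^\ast$, would be feasible for $\rho^\ast$ and return value at least $\optprod-O(\epsilon n^k)$. To estimate the unknown $C_i^\ast$, I would sample a uniformly random subset $T\subseteq\set{1,\ldots,n}$ of size $t=O(\log n/\epsilon^2)$ and approximate $C_i^\ast$ by the rescaled partial sum $\hat C_i$ ranging only over those terms whose $k-1$ partner qudits all lie in $T$ (evaluated at $\rho_j^\ast$ for $j\in T$). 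Since each summand has norm at most $1$, standard Hoeffding/Chernoff bounds give $\snorm{\hat C_i-C_i^\ast}\le\epsilon n^{k-1}$ simultaneously for all $i$ with high probability over $T$.

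Third, because $\rho_j^\ast$ for $j\in T$ is unknown and $\abs{T}=O(\log n)$, I would lay down a $\delta$-net $\mathcal{N}$ over single-qudit density matrices (of size $(C/\delta)^{O(d^2)}$) and enumerate all $\abs{\mathcal{N}}^{t}=\poly(n)$ joint guesses $\set{\hat\rho_j}_{j\in T}$. For the guess $\delta$-close to $\set{\rho_j^\ast}_{j\in T}$, propagating the perturbation through the degree-$(k-1)$ expression shows the resulting fixed operators $A_i$ obey $\snorm{A_i-C_i^\ast}=O((\epsilon+\delta)n^{k-1})$. For each guess I would solve the \emph{linear} SDP $\max\tfrac1k\sum_i\trace(A_i\rho_i)$ over $\rho_i\succeq 0$, $\trace(\rho_i)=1$ (pinning $\rho_j$ to $\hat\rho_j$ for $j\in T$ so the estimate stays honest), output the best product state found, and derandomize the choice of $T$ as in~\cite{AKK99}. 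Feasibility of $\rho^\ast$ gives SDP value at least $\optprod-O((\epsilon+\delta)n^k)$, while the consistency tie-in bounds the gap between the fixed-coefficient objective and the true value of the returned solution by $O((\epsilon+\delta)n^k)$; rescaling $\epsilon$ and choosing $\delta$ small then yields the additive $\epsilon n^k$ guarantee, in total time $\poly(n)$ per guess times $\poly(n)$ guesses.

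The main obstacle is precisely the jump from degree $2$ to degree $k$. For $k=2$ the induced coefficient $C_i(\rho)$ is \emph{linear} in the variables, so the consistency constraint is literally an SDP constraint and the analysis of~\cite{AKK99} transfers directly; for $k>2$ it is a degree-$(k-1)$ multilinear expression, and controlling the nonlinear error between the sampled coefficients $A_i$ and the coefficients actually induced by the returned solution --- while still expressing the scheme as a genuine SDP --- is the crux, and is exactly what the degree-$k$ inner product formalism is meant to manage. Compounding this is a difficulty with no classical analogue: the net is over the \emph{continuous} set of density operators and the assignment must remain positive semidefinite, so I would need to track carefully how $\delta$-perturbations of the $d\times d$ blocks propagate through both the multilinear coefficients and the SDP while preserving a valid product-state output. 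I expect essentially all of the real work to lie in this combined concentration-plus-net error analysis, showing the two-sided $O(\epsilon n^k)$ gap survives every approximation, rather than in the algorithmic template, which is a direct lift of the classical scheme.
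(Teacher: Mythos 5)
Your high-level template --- expanding each $H_{i_1,\ldots,i_k}$ in a Hermitian operator basis, exhaustively sampling $O(\log n)$ qudits via a $\delta$-net, solving one SDP per guess, and derandomizing as in~\cite{AKK99} --- is the same as the paper's, and for $k=2$ your scheme essentially \emph{is} the paper's algorithm. But for $k>2$ there is a genuine gap, and you name it yourself without closing it: the ``consistency constraint keeping the coefficients induced by $\rho$ within $\epsilon n^{k-1}$ of $C_i^\ast$'' is a degree-$(k-1)$ multilinear condition in the unknowns $\rho_j$, so the program you write down is not a semidefinite program, and no algorithm results. Nor can the constraint simply be dropped: the SDP value $\tfrac1k\sum_i\trace(A_i\rho_i)$ at the returned point says nothing about its true value $\trace(H\rho_1\otimes\cdots\otimes\rho_n)=\tfrac1k\sum_i\trace(C_i(\rho)\rho_i)$ unless the induced coefficients $C_i(\rho)$ are themselves forced close to the fixed estimates $A_i$, which is exactly what the constraint was for. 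The paper's entire technical contribution lives in this step: the procedure LINEARIZE is \emph{recursive}, replacing every degree-$b$ expression $t_b$ by a linear form $\sum_{ij}\trace(\sigma_j\rho_i)e_{ij}$ whose coefficients $e_{ij}$ are fixed numbers produced by the recursive estimator EVAL, and adding at every level of the recursion only \emph{linear} interval constraints $L-\epsilon' d^2 n^b\leq \sum_{ij}\trace(\sigma_j\rho_i)e_{ij}\leq U+\epsilon' d^2 n^b$. Correctness then rests on two lemmas your proposal has no analogue of: Lemma~\ref{l:feasible} (the true optimum, being consistent with the sampled net points, remains feasible for the linearized program with high probability) and Lemma~\ref{l:guarantee}, whose inductive claim (Eqn.~(\ref{eqn:miniclaim})) shows that for \emph{any} feasible point of the linearized program the true nonlinear value $t_b$ lies within a controlled error of the interval $[L,U]$ --- this induction on the degree is precisely the mechanism that makes a hierarchy of linear constraints control a multilinear quantity.

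A secondary, fixable problem is your concentration step. You estimate $C_i^\ast$ by the rescaled sum over those terms whose $k-1$ partner qudits all lie in $T$; these summands are indexed by $(k-1)$-tuples drawn from $T$ and are \emph{not} independent (tuples sharing a qudit are correlated), so ``standard Hoeffding/Chernoff bounds'' do not directly give $\snorm{\hat C_i-C_i^\ast}\leq\epsilon n^{k-1}$; you would need U-statistics-type concentration, and matrix-valued (or coordinate-wise in the $\sigma$-basis) versions at that. The paper sidesteps this entirely: each application of its Sampling Lemma (Lem.~\ref{l:sample}) is to a sum over a \emph{single} index of bounded scalars, and higher-degree sums are handled by recursion of EVAL rather than by sampling tuples, with the error $\epsilon_b$ of Eqn.~(\ref{eqn:alg1error}) compounding geometrically but controllably through the levels. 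So the part of the argument you defer as ``the combined concentration-plus-net error analysis'' is not a routine verification; it is the recursive linearization itself, without which the approach does not yield Theorem~\ref{thm:ptas} for any $k>2$.
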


We first outline our approach by generalizing the discussion in Sec.~\ref{scn:intro}, introducing tools and notation we will require along the way. The optimal value $\optprod$ over product state assignments for any \klh~instance can be expressed as the following program, denoted $P_1$:
\begin{equation}
      \optprod\hspace{2mm} = \hspace{2mm}\max \hspace{2mm}\sum_{i_1,\ldots,i_k}^n\trace(H_{i_1,\ldots,i_k}\rho_{i_1}\otimes\cdots\otimes\rho_{i_k})
    \hspace{2mm}\mbox{s.t.}\hspace{2mm} \rho_i\succeq 0\hspace{2mm} \mbox{and}\hspace{2mm}\trace(\rho_i)=1\hspace{2mm}\mbox{for } 1\leq i\leq n.\label{eqn:obj}
\end{equation}

\noindent As done in Eqn.~(\ref{eqn:introdecomp}), we now recursively decompose our objective function as a sequence of nested sums. Let $\set{\sigma_i}_{i=1}^d$ be a traceless, Hermitian orthogonal basis for the set of Hermitian operators acting on $\complex^d$, such that $\trace(\sigma_i\sigma_j)=2\delta_{ij}$ (for $\delta_{ij}$ the Kroenecker delta) (see, e.g.~\cite{K03}). Then, by rewriting each $H_{i_1,\ldots,i_k}$ in terms of $\set{\sigma_i}_{i=1}^d$, our objective function becomes
\begin{align}
    \sum_{i_k,\ldots,i_1}^n\trace\left[\left(\sum_{j_k,\ldots,j_1=1}^{d^2}r_{j_1,\ldots,j_k}^{i_1,\ldots,i_k}\sigma_{j_k}\otimes\cdots\otimes\sigma_{j_1}\right)\rho_{i_k}\otimes\cdots\otimes\rho_{i_1}\right] =\hspace{45mm}\nonumber\\
    \sum_{i_k,j_k}\trace(\sigma_{j_k}\rho_{i_k})\left[\sum_{i_{k-1},j_{k-1}}\trace(\sigma_{j_{k-1}}\rho_{i_{k-1}})\left[\cdots\left[\sum_{i_1}\trace\left(\left(\sum_{j_1}r_{j_1,\ldots,j_k}^{i_1,\ldots,i_k}\sigma_{j_1}\right)\rho_{i_1}\right)\right]\right]\right],\label{eqn:decomposition}
\end{align}
where each $\ve{r}^{i_1,\ldots,i_k}\in\reals^{d^2}$. We henceforth think of the objective function above as a ``degree-$k$ inner product'', i.e. as a sequence of $k$ nested sums involving inner products, in analogy to the degree-k polynomials of Ref.~\cite{AKK99}. In this sense, a degree-$1$ inner product would refer to only the innermost sums over $i_1$ and $j_1$, and a degree-$k$ inner product would denote the entire expression in Eqn.~(\ref{eqn:decomposition}). More formally, we denote\footnote{See the beginning of App.~\ref{app:A} for more elaborate notation used in the proofs of the claims of Sec.~\ref{scn:sepopt}.} a degree-$b$ inner product for $1\leq b \leq k$ using map $t_b:H(\complex^d)^{\times n}\mapsto\reals$, defined such that\footnote{Note that $t_b$ implicitly depends on parameters $i_{b+1},\ldots, i_k$ and $j_{b+1},\ldots , j_k$.} $t_b(\rho_1,\ldots,\rho_n):=    \sum_{i_b,j_b}\trace(\sigma_{j_b}\rho_{i_b})\left[\cdots\left[\sum_{i_1}\trace\left(\left(\sum_{j_1}r_{j_1,\ldots,j_k}^{i_1,\ldots,i_k}\sigma_{j_1}\right)\rho_{i_1}\right)\right]\right]$.

Our approach is to ``linearize'' the objective function of $P_1$ using exhaustive sampling and recursion to estimate its degree-$(k-1)$ inner products. To do so, we will require the Sampling Lemma.

\begin{lemma}[Sampling Lemma~\cite{AKK99}]\label{l:sample}
    Let $(a_i)$ be a sequence of $n$ real numbers with $\abs{a_i}\leq M$ for all $i$, and let $f>0$. If we choose a multiset of $s=g\log n$ of the $a_i$ at random (with replacement), then their sum $q$ satisfies
$
        \sum_i a_i - nM\sqrt{\frac{f}{g}} \leq q\frac{n}{s}\leq \sum_i a_i + nM\sqrt{\frac{f}{g}}
$
    with probability at least $1-n^{-f}$.
\end{lemma}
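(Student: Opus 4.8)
The plan is to recognize $q$ as a sum of $s$ independent, identically distributed bounded random variables and apply a standard concentration inequality of Hoeffding--Chernoff type. Concretely, let $X_1,\ldots,X_s$ denote the values obtained in the $s$ samples, where each $X_j$ independently equals $a_i$ with probability $1/n$ for each $i$ (sampling with replacement). Then $q=\sum_{j=1}^s X_j$, and since $E[X_j]=\frac{1}{n}\sum_i a_i$ we have $E[q]=\frac{s}{n}\sum_i a_i$, so that $q\frac{n}{s}$ is an \emph{unbiased} estimator of $\sum_i a_i$. The target two-sided inequality is therefore equivalent, after multiplying through by $\frac{s}{n}$, to the statement that $q$ deviates from its mean by more than $t:=sM\sqrt{f/g}$ with probability at most $n^{-f}$.

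First I would record that each $X_j$ lies in the interval $[-M,M]$, so that Hoeffding's inequality applies to the independent sum $q$: for any $t>0$,
\[
    \Pr\!\left[\abs{q-E[q]}> t\right]\le 2\exp\!\left(-\frac{t^2}{2sM^2}\right).
\]
Next I would substitute $t=sM\sqrt{f/g}$, so that the exponent becomes $-\frac{sf}{2g}$, and then use the hypothesis $s=g\log n$ to collapse this to $-\frac{f}{2}\log n$. Translating back through the equivalence $\abs{q-E[q]}>t \iff \abs{q\frac{n}{s}-\sum_i a_i}>nM\sqrt{f/g}$ then yields the desired two-sided bound.

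The only genuinely delicate point is matching the constant in the exponent. The crudest application of Hoeffding, using the full range $2M$ of each $X_j$, produces a tail of the form $2\exp(-\frac{f}{2}\log n)=2n^{-f/2}$, whereas the lemma claims $n^{-f}$; this gap is a matter of bookkeeping rather than substance. It can be closed by reabsorbing the lost factor into the constant $g$ (the lemma is only ever invoked for some suitable $s=g\log n$), or by invoking the sharper constant appropriate to the sampling-with-replacement setting. The structure of the argument---unbiased estimation followed by conversion of an exponential tail into a polynomial-in-$n$ tail via the choice $s=g\log n$---is in any case unchanged. Since the lemma is quoted directly from~\cite{AKK99}, one may alternatively simply cite their proof, but the reconstruction above establishes it from first principles.
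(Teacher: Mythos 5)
Your proof is correct and takes exactly the approach the paper intends: the paper offers no details beyond remarking that the lemma ``follows from a simple application of the H\"{o}ffding bound''~\cite{H64}, which is precisely the argument you reconstruct (i.i.d.\ uniform samples, unbiased estimator $q\frac{n}{s}$, two-sided Hoeffding tail, substitution of $s=g\log n$). Your explicit handling of the constant in the exponent---obtaining $2n^{-f/2}$ rather than $n^{-f}$ and absorbing the discrepancy into the choice of $g$---is a legitimate resolution, since everywhere the paper invokes the lemma (e.g.\ in Lem.~\ref{l:evalbound} and Alg.~\ref{alg:final}) the constants $f$ and $g$ are chosen with ample slack, so only the qualitative form $1-n^{-\Omega(f)}$ of the success probability matters.
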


\noindent The proof of Lemma~\ref{l:sample} follows from a simple application of the H\"{o}ffding bound~\cite{H64}. To use the Sampling Lemma in conjunction with exhaustive sampling, we will discretize the space of $1$-qudit density operators using a $\delta$-net $G\subseteq H(\complex^{d})$, such that for all $\rho\in D(\complex^{d})$, there exists $\sigma\in G$ such that $\fnorm{\rho-\sigma}\leq \delta$. We now show how to construct $G$.

To obtain $G$, we instead construct a $\delta$-net for a subset of $H(\complex^d)$ which \emph{contains} $D(\complex^d)$, namely the set\footnote{Note: A net over $\mathcal{A}(\complex^d)$ may allow non-positive assignments for a qudit. See Sec.~\ref{sscn:finalptas} for why this is of no consequence.}  $\mathcal{A}(\complex^d):=\set{A\in H(\complex^d)\mid \max_{i,j} \abs{A(i,j)}\leq 1}$. Creating a $\delta$-net over $\mathcal{A}(\complex^d)$ is simple: we cast a $(\delta/d)$-net over the unit disk for each of the complex $d(d-1)/2$ matrix entries above the diagonal, and likewise over $[-1,1]$ for the entries on the diagonal. Letting $m$ and $n$ denote the minimum number of points required to create such $(\delta/d)$-nets for each of the diagonal and off-diagonal entries, respectively, we have that $\abs{G}=m^{\frac{d(d-1)}{2}}n^d$. For example, simple nets of size $m\approx d/\delta$ and $n\approx d^2/\delta^2$ can be obtained by placing a 1D and 2D grid over $[-1,1]$ and the length $2$ square in the complex plane centered at $(0,0)$, respectively, implying $\abs{G}\in O(1)$ when $d\in O(1)$. To show that $G$ is indeed a $\delta$-net, we now bound the Frobenius\footnote{We use the Frobenius norm as it allows a simple analysis. It is straightforward, however, to switch to say the trace norm using the fact that $\fnorm{X}\leq \sqrt{d}\trnorm{X}$ for all $X\in\complex^d$.} distance between arbitrary $\rho\in D(\complex^d)$ and the closest $\tilde{\rho}\in G$. Specifically, let $A:=\rho-\tilde{\rho}$. Then:
\[
    \fnorm{A}=\sqrt{\trace(A^\dagger A)}=\sqrt{\sum_{ij}\abs{A(i,j)}^2}\leq\sqrt{\sum_{ij}(\delta/d)^2}=\frac{\delta}{d}(d)=\delta.
\]

Finally, we remark that our \emph{dense} assumption on \klh~instances is only necessary to convert the absolute error of Thm.~\ref{thm:ptas} to a relative one~\cite{GK10} (this conversion is detailed in Sec.~\ref{sscn:finalptas}). The remaining sections are organized as follows: In Sec.~\ref{sscn:recurseSample}, we show how to recursively estimate degree-$b$ inner products using the Sampling Lemma. We then use this estimation technique in Sec.~\ref{sscn:linearization} to linearize our optimization problem $P_1$. Sec.~\ref{sscn:finalptas} brings everything together by presenting and analyzing the complete approximation algorithm. To ease reading of the remaining sections, all technical proofs are found in App.~\ref{app:A}. Please see the beginning of App.~\ref{app:A} for definitions of the more elaborate notation used in these proofs.

%SEV: if we wish to move technical proofs back into body, remember to adjust the paragraph above
\subsection{Estimating degree-$k$ inner products using the Sampling Lemma}\label{sscn:recurseSample}

Our recursive procedure, EVAL, for estimating a degree-$k$ inner product using the Sampling Lemma is stated as Alg.~\ref{alg:estimate}. There are two sources of error we must analyze: the Sampling Lemma, and our $\delta$-net over $\complex^d$. We claim that EVAL estimates the degree-$b$ inner product $t_b(\rho_1,\ldots,\rho_n)$ to within additive error $\pm \epsilon_{b} n^{b}$, where $\epsilon_{b}$ is defined as follows. Set $\Delta := \sqrt{2}d(1+\delta)$, for $\delta$ from our $\delta$-net. Then,
\begin{equation}
    \epsilon_{b} := \cc \left(\sqrt{\frac{f}{g}}+\delta\right)\left(\frac{\Delta^{b}-1}{\Delta-1}\right).\label{eqn:alg1error}
\end{equation}
The following lemma formalizes this claim. We adopt the convention of~\cite{AKK99} and let $x\in y\pm z$ denote $x\in [y,z]$. Alg.~\ref{alg:estimate} is our operator analogue of the algorithm \emph{Eval} in Section 3.3 of~\cite{AKK99}. %We sometimes abbreviate $t_b(\rho_1,\ldots,\rho_n)$ as $t_{b}$ to simplify notation.

\begin{figure}[t]
\noindent\rule{\linewidth}{0.3mm}
\begin{alg}\rm{EVAL( }$t_{b}$ , $S$ , $\set{\tilde{\rho}_i:i\in S}$\rm{ )}.\label{alg:estimate}
    \begin{itemize}
        \item Input:\hspace{3mm}(1) A degree-$b$ inner product $t_{b}:H(\complex^d)^{\times n}\mapsto\reals$ for $1\leq b\leq k$\\
        \mbox{\hspace{12mm}}(2) A subset $S\subseteq\set{1,\ldots,n}$ of size $\abs{S}=O(\log n)$\\
         \mbox{\hspace{12mm}}(3) Sample points $\set{\tilde{\rho}_i:i\in S}$ such that $\fnorm{\tilde{\rho}_i-\rho_i}\leq \delta$ for all $i\in S$
        \item Output: $x\in\reals$ such that $x\in t_b(\rho_1,\ldots,\rho_n) \pm \epsilon_b n^{b}$ (for $\epsilon_b$ defined in Eqn.~(\ref{eqn:alg1error})).
    \end{itemize}
    \begin{compactenum}
        \item For all $i\in S$ and $j= 1\ldots d^2$:
        \begin{compactenum}
            \item (Base Case) if $b=1$, set $e_{ij}=1$.
            \item (Recurse) else, set $e_{ij}$ = EVAL($t_{b-1}^{ij},S,\set{\tilde{\rho}_i:i\in S})$.
        \end{compactenum}
        \item Return $\frac{n}{\abs{S}}\sum_{i\in S}\left[\sum_{j=1}^{d^2}\trace(\sigma_{j}\tilde{\rho}_{i})e_{ij}\right]$.
    \end{compactenum}
\end{alg}
\noindent\rule{\linewidth}{0.3mm}
\end{figure}

\begin{lemma}\label{l:evalbound}
    Let $t_{k}:H(\complex^k)^{\times n}\mapsto\reals$ be defined using set $\set{H_{i_1,\ldots,i_k}}\subseteq H((\complex^d)^{\otimes k})$ (as in Eqn.~(\ref{eqn:decomposition})). Let $S\subseteq \set{1,\ldots , n}$ such that $\abs{S}=g\log n$ have its elements chosen uniformly at random with replacement. Let $\rho_1,\ldots ,\rho_n \in D(\complex^d)$ be some assignment on all $n$ qudits, and $\set{\tilde{\rho}_i:i\in S}$ a set of elements in our $\delta$-net such that $\fnorm{\rho_i-\tilde{\rho}_i}\leq \delta$ for all $i\in S$. Then, for $1\leq b\leq k$, with probability at least $1-d^{2b}n^{b-f}$, we have
$
        \operatorname{EVAL}(t_{b},S,\set{\tilde{\rho}_i:i\in S})\in t_b(\rho_1,\ldots,\rho_n) \pm \epsilon_{b} n^{b},
$
    where $\epsilon_{b}$ is defined as in Eqn.~(\ref{eqn:alg1error}).
\end{lemma}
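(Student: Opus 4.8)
The plan is to prove the estimate by induction on the degree $b$, mirroring the recursive structure of EVAL; this is the operator-valued version of the inductive analysis of \emph{Eval} in~\cite{AKK99}, the new feature being that two error sources --- the Sampling Lemma and the $\delta$-net --- must be tracked simultaneously. Throughout, I would condition on a single good draw of the sample $S$ and defer all failure probability to a union bound at the end.

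For the base case $b=1$, EVAL performs one application of Lemma~\ref{l:sample} to the innermost sum, so the two quantities to control are the per-element magnitude $M$ required by the Sampling Lemma and the error of replacing each $\rho_i$ by its net point $\tilde\rho_i$. For the latter I would use the two elementary estimates that drive the whole proof: writing any $1$-qudit state in the basis $\{\sigma_j\}$ via $\rho=\frac12\sum_j\trace(\sigma_j\rho)\sigma_j$, Cauchy--Schwarz over the $d^2$ coordinates together with $\fnorm{\sigma_j}=\sqrt2$ yields $\sum_j\abs{\trace(\sigma_j\tilde\rho_i)}\le\Delta$ and $\sum_j\abs{\trace(\sigma_j(\tilde\rho_i-\rho_i))}\le\sqrt2\,d\,\delta$, using $\fnorm{\tilde\rho_i}\le1+\delta$ and $\fnorm{\tilde\rho_i-\rho_i}\le\delta$. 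Combining the sampling error (of order $nM\sqrt{f/g}$) with the net error (of order $M\delta$) gives the claimed $\pm\epsilon_1 n=\pm\cc(\sqrt{f/g}+\delta)n$ accuracy, with failure probability at most $n^{-f}$.

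For the inductive step I would write the true value as $t_b=\sum_{i=1}^n a_i$ with $a_i=\sum_j\trace(\sigma_j\rho_i)E_{ij}$ and $E_{ij}:=t_{b-1}^{ij}(\rho_1,\ldots,\rho_n)$, and let $V$ be EVAL's output. Inserting the intermediate quantity $W:=\frac{n}{\abs S}\sum_{i\in S}a_i$ --- which uses the \emph{true} summands on the sample --- splits the error as $\abs{V-t_b}\le\abs{V-W}+\abs{W-t_b}$. The term $\abs{W-t_b}$ is exactly a Sampling-Lemma estimate for the \emph{fixed} sequence $(a_i)_i$ (fixed precisely because $E_{ij}$ is independent of $S$), contributing of order $n\sqrt{f/g}$ times a magnitude bound. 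The term $\abs{V-W}$ absorbs both the net error and the recursion error: writing each summand discrepancy as $\trace(\sigma_j\tilde\rho_i)(e_{ij}-E_{ij})+(\trace(\sigma_j\tilde\rho_i)-\trace(\sigma_j\rho_i))E_{ij}$, the inductive hypothesis $\abs{e_{ij}-E_{ij}}\le\epsilon_{b-1}n^{b-1}$ and the two elementary estimates above give a bound of order $\Delta\,\epsilon_{b-1}n^{b}$ plus a net-error term. These combine into the recurrence $\epsilon_b=\Delta\,\epsilon_{b-1}+\cc(\sqrt{f/g}+\delta)$, whose solution with $\epsilon_1=\cc(\sqrt{f/g}+\delta)$ is the claimed closed form, since $\frac{\Delta^b-1}{\Delta-1}=\Delta\frac{\Delta^{b-1}-1}{\Delta-1}+1$. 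For the probability, since the one sample $S$ is reused at every node of the recursion, I would union-bound over all Sampling-Lemma invocations: there are at most $(\abs S d^2)^{b-1}\le(nd^2)^{b-1}$ per level and at most $k$ levels, each failing with probability $\le n^{-f}$, which is comfortably below the stated $d^{2b}n^{b-f}$.

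The main obstacle is pinning down the magnitude bound so that the per-level additive term is \emph{exactly} the constant-in-$b$ quantity $\cc(\sqrt{f/g}+\delta)$, rather than something that grows by an extra factor at each level and destroys the clean geometric sum. Naive per-term Cauchy--Schwarz bounds on $\abs{E_{ij}}$ are too lossy by powers of $d$; the right move, I expect, is to re-express a degree-$b$ inner product in closed operator form: using $\sum_j\trace(\sigma_j\rho)\sigma_j=2\rho$ to collapse the inner sums, one finds that a degree-$b$ inner product (with its outer indices held fixed) equals a fixed multiple of $\sum_{i_1,\ldots,i_b}\trace[\,H_{i_1,\ldots,i_k}\,(\rho_{i_1}\otimes\cdots\otimes\rho_{i_b})\otimes\sigma_{j_{b+1}}\otimes\cdots\otimes\sigma_{j_k}\,]$, and then bound it by H\"{o}lder's inequality via $\snorm{H_{i_1,\ldots,i_k}}\le1$ and $\trnorm{\sigma_j}\le\sqrt{2d}$. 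It is precisely the residual $\sigma$-factors (one trace norm $\sqrt{2d}$ per uncontracted tensor slot) that produce the $\cc=d^{k/2}$ prefactor, and arranging for this factor to appear once and uniformly across the recursion, while simultaneously preventing the net and sampling errors from compounding, is the delicate heart of the argument.
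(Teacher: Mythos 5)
Your proposal has the same inductive skeleton as the paper's proof: apply the Sampling Lemma (Lem.~\ref{l:sample}) to the \emph{fixed} sequence of true partial sums $a_i=\sum_j\trace(\sigma_j\rho_i)t_{b-1}^{ij}$ (fixed because the true values do not depend on $S$), account separately for the $\delta$-net error and for the inexactness of the recursive estimates via the inductive hypothesis and the bound $\sum_j\abs{\trace(\sigma_j\tilde{\rho}_i)}\leq\Delta$, arrive at the recurrence $\epsilon_b=\Delta\epsilon_{b-1}+\cc\left(\sqrt{f/g}+\delta\right)$, and union-bound failures over all recursive Sampling-Lemma invocations (your count, using $\abs{S}$ rather than $n$ calls per node, is in fact tighter than the paper's stated bound, which deliberately over-counts). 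Where you genuinely differ is in how the uniform magnitude bound needed for the Sampling Lemma is obtained. The paper isolates this as a separate lemma (Lem.~\ref{l:ubound}): termwise Cauchy--Schwarz in Frobenius norm together with the identity $\fnorm{H_b}=2^{(b-k)/2}\fnorm{H_k}\leq\cc$ for the partially contracted operators, stated for \emph{arbitrary} Hermitian arguments with a $\max_i\fnorm{\rho_i}$ prefactor. You instead collapse the nested sums into a single trace against $H_{i_1,\ldots,i_k}$ tensored with the residual $\sigma$'s and apply H\"older with $\snorm{H_{i_1,\ldots,i_k}}\leq 1$ and $\trnorm{\sigma_j}\leq\sqrt{2d}$; your identity (with the multiple $2^{-(k-b)}$) is correct and yields a bound $(d/2)^{(k-b)/2}n^{b}\leq\cc\, n^{b}$, slightly tighter than the paper's. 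Both routes work.

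The one place your write-up has real slack is the net-error cross term in the inductive step. Bounding it as $\left(\sum_j\abs{\trace(\sigma_j(\tilde{\rho}_i-\rho_i))}\right)\cdot\max_j\abs{E_{ij}}\leq\sqrt{2}\,d\,\delta\cdot\max_j\abs{E_{ij}}$ is lossy: even with your tighter bound on $\abs{E_{ij}}$, this exceeds $\cc\,\delta\, n^{b-1}$ by a factor as large as $\sqrt{d}$ (e.g.\ at $b=k=2$), so the recurrence you actually derive has additive term $\cc\sqrt{f/g}+c_d\,\cc\,\delta$ with $c_d>1$, and the closed form of Eqn.~(\ref{eqn:alg1error}) is not literally recovered. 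This is only a constant-factor blemish ($\delta$ is a free parameter, so the downstream results are untouched), and the repair is your own device applied one more time: keep the difference operator \emph{inside} the collapsed trace, i.e.\ write $\sum_j\trace(\sigma_j(\tilde{\rho}_i-\rho_i))E_{ij}=2^{-(k-b)}\sum\trace\left[H_{i_1,\ldots,i_k}\left(\cdots\otimes(\tilde{\rho}_i-\rho_i)\otimes\cdots\right)\right]$ and apply H\"older there, which gives at most $\cc\,\delta\, n^{b-1}$. This is precisely what the paper's formulation of Lem.~\ref{l:ubound} buys for free: being stated with a $\max_i\fnorm{\rho_i}$ prefactor over Hermitian (not just density) operators, it is invoked with one tensor slot set to $\rho_i-\tilde{\rho}_i$, both in the base case (Eqn.~(\ref{eqn:sampleerror})) and in its inductive-step analogue.
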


\subsection{Linearizing our optimization problem}\label{sscn:linearization}

Our procedure, LINEARIZE, for ``linearizing'' the objective function of $P_1$ using EVAL from Sec.~\ref{sscn:recurseSample} is stated as Alg.~\ref{alg:linearize}. Alg.~\ref{alg:linearize} takes as input $P_1$ and a set of sample points $\set{\tilde{\rho_i}}$, and outputs a semidefinite program (SDP) which we shall henceforth refer to as $P_2$. We remark that LINEARIZE is our version of the procedure \emph{Linearize} in Sec.~3.4 of~\cite{AKK99}, extended to the setting of operators and a more complex error structure. Although LINEARIZE is presented as linearizing an objective function here, the same techniques straightforwardly apply in linearizing constraints involving high-degree inner products.

\begin{figure}[t]
\noindent\rule{\linewidth}{0.3mm}
\begin{alg}\rm{LINEARIZE( }$t_b$ , $\mathcal{N}$ , $S$, $\set{\tilde{\rho}_i:i\in S}$, $\epsilon$, $U$, $L$\rm{ )}.\label{alg:linearize}
    \begin{itemize}
        \item Input:\hspace{3mm}(1) A degree-$b$ inner product $t_{b}:H(\complex^d)^{\times n}\mapsto\reals$ for $1\leq b\leq k$.\\
               \mbox{\hspace{12mm}}(2) A set of linear constraints $\mathcal{N}$ (e.g. ``$\rho_i\succeq 0$'').\\
               \mbox{\hspace{12mm}}(3) A subset $S\subseteq\set{1,\ldots,n}$ of size $\abs{S}=O(\log n)$.\\
         \mbox{\hspace{12mm}}(4) Sample points $\set{\tilde{\rho}_i:i\in S}$ consistent with some feasible solution $(\rho_1,\ldots,\rho_n)$ for \mbox{\hspace{18mm}}$P_1$ such that $\fnorm{\tilde{\rho}_i-\rho_i}\leq \delta$ for all $i\in S$.  \\
               \mbox{\hspace{12mm}}(5) An error parameter $\epsilon>0$.\\
                \mbox{\hspace{12mm}}(6) (Optional) upper and lower bounds $U,L\in\reals$. If $U$ and $L$ are not provided, we\\
                \mbox{\hspace{19mm}}assume $U,L=\infty$.
        \item Output: (1) (Optional) A linear objective function $f:(L(\complex^d))^{ \times n}\rightarrow \reals$.\\
                \mbox{\hspace{13mm}}(2) An updated set of linear constraints, $\mathcal{N}$.
    \end{itemize}
    \begin{compactenum}
        \item (Base case) If $b=1$, then
        \begin{compactenum}
            \item (Trivial: Initial objective function was linear) If $U=L=\infty$, return [$t_b$, $\mathcal{N}$].
            \item (Reached bottom of recursion) Else, return [$\mathcal{N}\cup \set{``L\leq t_{b}(\rho_1,\ldots,\rho_n)\leq U"}$].
        \end{compactenum}
        \item (Recursive case) For $i=1\ldots n$ and $j=1\ldots d^2$ do
        \begin{compactenum}
            \item Set $e_{ij}:=\operatorname{EVAL}(t_{b-1}^{ij},S,\set{\tilde{\rho}_i:i\in S})$.
            \item Set $\epsilon^\prime := \epsilon - \cc\left(\sqrt{\frac{f}{g}}+\delta\right)\Delta^{b-1}$, for $\Delta$ defined in Eqn.~(\ref{eqn:alg1error}).
            \item Set $l_{ij}:=e_{ij}-\epsilon^\prime n^{b-1}$ and $u_{ij}:=e_{ij}+\epsilon^\prime n^{b-1}$.
            \item Call LINEARIZE($t_{b-1}^{ij},\mathcal{N},S,\set{\tilde{\rho}_i:i\in S}, \epsilon^\prime,u_{ij},l_{ij}$).
        \end{compactenum}
        \item (a) (Entire computation done) If $U=L=\infty$, return $\left[\sum_{ij}\trace(\sigma_{j}{\rho}_{i})e_{ij}, \mathcal{N}\right]$.\\
        (b) (Recursive call done) Else, return $\left[\mathcal{N}\cup \set{``L-\epsilon^\prime d^2n^{b}\leq \sum_{ij}\trace(\sigma_{j}{\rho}_{i})e_{ij}\leq U+\epsilon^\prime d^2n^{b}"}\right]$.
    \end{compactenum}
\end{alg}
\noindent\rule{\linewidth}{0.3mm}
\end{figure}

%SEV: space save We remark that the linear constraints output on each recursive call on line 3(b) of Alg.~\ref{alg:linearize} ensure the approximate consistency with our estimates from EVAL for any solution to $P_2$, as well as play a crucial role in bounding how good of an approximation $P_2$ yields to $P_1$.

To prove correctness of our final approximation algorithm, we require the following two important lemmas regarding $P_2$. The first shows that any feasible solution $(\rho_1,\ldots,\rho_n)$ for $P_1$ consistent with the sample set $\set{\tilde{\rho}_i:i\in S}$ fed into LINEARIZE is also a feasible solution for $P_2$ with high probability.

\begin{lemma}\label{l:feasible}
    Let $t_{k}$, assignment $(\rho_1,\ldots,\rho_n)$, $S$, and $\set{\tilde{\rho}_i:i\in S}$ be defined as in Lem.~\ref{l:evalbound}. Then, for any $f,g>0$, calling LINEARIZE with parameters $t_{k}$, $\set{\tilde{\rho}_i:i\in S}$, and $\epsilon=\epsilon_k$ (for $\epsilon_k$ defined in Eqn.~(\ref{eqn:alg1error})) yields an SDP $P_2$ for which the assignment $\set{\rho_1,\ldots,\rho_n}$ is feasible with probability at least $1-d^{2k}n^{k-f}$.
\end{lemma}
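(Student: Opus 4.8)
The plan is to exhibit a single high-probability ``good event'' under which the fixed assignment $(\rho_1,\ldots,\rho_n)$ satisfies every constraint that LINEARIZE writes into $P_2$, and to control the probability of this event through Lem.~\ref{l:evalbound}. Since $(\rho_1,\ldots,\rho_n)$ is feasible for $P_1$, it automatically meets the original constraints collected in $\mathcal{N}$ (e.g. $\rho_i\succeq 0$, $\trace(\rho_i)=1$), so only the constraints \emph{added} during the recursion need checking. Inspecting Alg.~\ref{alg:linearize}, these are of exactly two types: the base-case bounds ``$L\le t_1(\rho)\le U$'' produced at step 1(b), and the linearized bounds ``$L-\epsilon' d^2 n^{b}\le \sum_{ij}\trace(\sigma_j\rho_i)e_{ij}\le U+\epsilon' d^2n^{b}$'' produced at step 3(b). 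I would verify each type separately, conditioned on the good event that every estimate $e_{ij}$ computed anywhere in the recursion lies within its Lem.~\ref{l:evalbound} error bar of the true inner product it estimates.

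The first key step is a bookkeeping observation about the error parameter: using the closed form $\epsilon_b=\cc(\sqrt{f/g}+\delta)\frac{\Delta^{b}-1}{\Delta-1}$ from Eqn.~(\ref{eqn:alg1error}), the assignment $\epsilon'=\epsilon-\cc(\sqrt{f/g}+\delta)\Delta^{b-1}$ in step 2(b) telescopes, so that a LINEARIZE call entered at degree $b$ always carries error parameter $\epsilon=\epsilon_b$ and passes $\epsilon'=\epsilon_{b-1}$ to its children. This is precisely what aligns the half-widths $\epsilon' n^{b-1}=\epsilon_{b-1}n^{b-1}$ of the windows $[l_{ij},u_{ij}]$ with the additive error $\pm\epsilon_{b-1}n^{b-1}$ that Lem.~\ref{l:evalbound} guarantees for $\mathrm{EVAL}(t_{b-1}^{ij})$. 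Granting this, the base-case constraints are immediate: on the good event $\abs{e_{ij}-t_{b-1}^{ij}(\rho)}\le \epsilon_{b-1}n^{b-1}$, hence $t_{b-1}^{ij}(\rho)\in[e_{ij}-\epsilon_{b-1}n^{b-1},\,e_{ij}+\epsilon_{b-1}n^{b-1}]=[l_{ij},u_{ij}]$.

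For the linearized constraints of step 3(b), I would argue by induction on the recursion depth. The parent window $[L,U]$ is centred at an estimate of $t_b(\rho)$ with half-width $\epsilon_b n^b$, so on the good event $t_b(\rho)\in[L,U]$; it then remains to bound the \emph{linearization error} between the true inner product and its linearized surrogate,
\begin{equation*}
  \Bigl|\,t_b(\rho)-\sum_{ij}\trace(\sigma_j\rho_i)e_{ij}\,\Bigr| \;=\; \Bigl|\sum_{ij}\trace(\sigma_j\rho_i)\bigl(t_{b-1}^{ij}(\rho)-e_{ij}\bigr)\Bigr| \;\le\; \Bigl(\sum_{ij}\abs{\trace(\sigma_j\rho_i)}\Bigr)\,\epsilon_{b-1}n^{b-1}.
\end{equation*}
Using that $\set{\sigma_j/\sqrt2}$ is orthonormal, a Cauchy--Schwarz estimate gives $\sum_{j}\abs{\trace(\sigma_j\rho_i)}\le\sqrt2\,d$ and hence $\sum_{ij}\abs{\trace(\sigma_j\rho_i)}\le\sqrt2\,d\,n\le d^2 n$, so the linearization error is at most $\epsilon_{b-1}d^2 n^{b}$, which is exactly the slack $\epsilon' d^2 n^{b}$ added at step 3(b). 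Combining this with $t_b(\rho)\in[L,U]$ shows the surrogate lies in $[L-\epsilon'd^2n^b,\,U+\epsilon'd^2n^b]$, establishing the constraint.

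Finally, for the probability bound I would condition on the sample set $S$ (the sole source of randomness) and union-bound the guarantee of Lem.~\ref{l:evalbound} over all estimates $e_{ij}$ produced in the recursion. Grouping these by degree, the degree-$k$ layer alone contributes $nd^2$ calls to $\mathrm{EVAL}(t_{k-1}^{ij})$, each good with probability at least $1-d^{2(k-1)}n^{(k-1)-f}$ and hence failing with probability at most $nd^2\cdot d^{2(k-1)}n^{(k-1)-f}=d^{2k}n^{k-f}$. The main obstacle I anticipate is twofold, and both parts are matters of careful accounting: first, ensuring that at every recursion level the telescoped slack simultaneously absorbs the inherited sampling error of Lem.~\ref{l:evalbound} and the linearization error (the delicate interplay handled in the third paragraph); and second, gathering the contributions of \emph{all} layers of the recursion so that the cumulative failure probability still collapses to the single clean bound $d^{2k}n^{k-f}$ claimed in the statement.
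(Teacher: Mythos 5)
Your feasibility argument is sound and matches the paper's in structure: the telescoping observation $\epsilon-\cc(\sqrt{f/g}+\delta)\Delta^{b-1}=\epsilon_{b-1}$, the conditioning on the good event that every EVAL estimate meets its Lem.~\ref{l:evalbound} error bar, and the bound $\sum_{ij}\abs{\trace(\sigma_j\rho_i)}\le\sqrt{2}dn\le d^2n$ showing the linearization error is absorbed by the slack $\epsilon'd^2n^b$ of step 3(b). (In fact your Cauchy--Schwarz accounting here is cleaner than the paper's, which implicitly treats each $\abs{\trace(\sigma_j\rho_i)}$ as at most $1$.)

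The genuine gap is the probability bound, which you yourself flag as an unresolved ``obstacle'' rather than prove. Your plan --- applying the \emph{statement} of Lem.~\ref{l:evalbound} as a black box to each EVAL call and union-bounding --- cannot work: as you computed, the top layer alone already costs $nd^2\cdot d^{2(k-1)}n^{(k-1)-f}=d^{2k}n^{k-f}$, and each of the $k$ layers contributes the same amount, so a naive aggregation gives $k\,d^{2k}n^{k-f}$, exceeding the claim. The missing idea, which is how the paper closes this, is to appeal to the \emph{proof} of Lem.~\ref{l:evalbound} rather than its statement. That proof establishes the stronger bound $1-\left(\sum_{m=0}^{k-1}d^{2m}n^m\right)n^{-f}\ge 1-d^{2k}n^{k-f}$ by union-bounding over \emph{every} Sampling Lemma application in the full recursion tree of depth $k$ and branching factor $d^2n$ --- deliberately counting as if all $n$ indices (not just the $\abs{S}$ sampled ones) were estimated at each level. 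Since EVAL is a deterministic function of $S$ and the sample points, every EVAL call that LINEARIZE issues, at every recursion layer, estimates terms $t_{b}^{ij}$ whose underlying sampling events are exactly members of this one collection; the events across layers are shared, not independent additions. Hence a \emph{single} union bound over the $\sum_{m=0}^{k-1}(d^2n)^m$ distinct sampling events, each failing with probability at most $n^{-f}$, simultaneously validates all estimates LINEARIZE ever uses, and the total failure probability is at most $d^{2k}n^{k-f}$ with no per-layer multiplication. Without this identification of the events, your argument as written does not deliver the stated constant.
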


The second lemma is a bound on how far the optimal solution of $P_2$ is from the optimal solution for $P_1$. We adopt the convention of~\cite{AKK99} and write $[x,y]\pm z$ to denote interval $[x-z,y+z]$.

\begin{lemma}\label{l:guarantee}
    Let $\optprod$ be the optimal value for $P_1$ with corresponding assignment $\rho^{\optprod}:=(\rho^{\operatorname{opt}}_1,\ldots,\rho^{\operatorname{opt}}_n)$, and let $\set{\tilde{\rho}_i:i\in S}$ be such that $\fnorm{\tilde{\rho}_i-\rho^{\operatorname{opt}}_i}\leq \delta$ for all $i\in S$ for some $S\subseteq\set{1,\ldots,n}$. Let $P_2$ denote the SDP obtained by calling LINEARIZE with $S$, and denote by $\epsilon_m$ for $1\leq m \leq k$ the error parameter passed with map $t_m$ into a (possibly recursive) call to LINEARIZE. Then, letting $\optt$ denote the optimal value of $P_2$, we have with probability at least
     $1-d^{2k}n^{k-f}$ (for parameters set as in Lem.~\ref{l:feasible}) that
%    \[
 $       \optt \in \optprod \pm d(d+\sqrt{2})\left[\sum_{m=1}^{k-1}(\sqrt{2}d)^{k-1-m}\epsilon_m\right]n^{k}.$
%    \]
\end{lemma}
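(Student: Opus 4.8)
The plan is to prove the two-sided estimate by bounding $\optt$ from below and from above separately, in each case relating the linearized SDP $P_2$ back to $P_1$ through the optimal product assignment $\rho^{\optprod}$. Throughout, the single most useful estimate is the Cauchy--Schwarz bound $\sum_{j=1}^{d^2}\abs{\trace(\sigma_j\rho_i)}\leq\sqrt{2}d$, which holds because $\set{\sigma_j/\sqrt{2}}$ is orthonormal and $\fnorm{\rho_i}\leq 1$ for any density operator (so $\sum_j\trace(\sigma_j\rho_i)^2\leq 2$, and there are at most $d^2$ terms). This is exactly the operator analogue of the bound $\abs{x_i}\leq 1$ used in~\cite{AKK99}, and it is what produces the per-level amplification factor $\sqrt{2}d$ appearing in the claim.

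For the lower bound $\optt\geq\optprod - (\cdots)n^k$, I would invoke Lem.~\ref{l:feasible}: with the stated probability, $\rho^{\optprod}$ is itself feasible for $P_2$, so $\optt\geq f(\rho^{\optprod})$, where $f(\rho)=\sum_{ij}\trace(\sigma_j\rho_i)e_{ij}$ is the linear objective output by LINEARIZE and $e_{ij}=\operatorname{EVAL}(t_{k-1}^{ij},\ldots)$. Since $\optprod=\sum_{ij}\trace(\sigma_j\rho^{\operatorname{opt}}_i)\,t_{k-1}^{ij}(\rho^{\optprod})$, the gap equals $\sum_{ij}\trace(\sigma_j\rho^{\operatorname{opt}}_i)\,(e_{ij}-t_{k-1}^{ij}(\rho^{\optprod}))$. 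Here Lem.~\ref{l:evalbound} controls each factor $\abs{e_{ij}-t_{k-1}^{ij}(\rho^{\optprod})}$ by $\epsilon_{k-1}n^{k-1}$, and the Cauchy--Schwarz estimate collapses the double sum to a factor $\sqrt{2}d\cdot n$, giving a lower-bound slack of order $\sqrt{2}d\,\epsilon_{k-1}n^{k}$, comfortably inside the claimed bound. Note that no constraint cascade is needed here, since $\epsilon_{k-1}$ already encodes the full geometric sum of sampling and $\delta$-net errors via the factor $(\Delta^{k-1}-1)/(\Delta-1)$ in Eqn.~(\ref{eqn:alg1error}).

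For the upper bound $\optt\leq\optprod+(\cdots)n^k$, let $\rho^*$ be an optimum of $P_2$, so $\optt=f(\rho^*)$ and $\rho^*$ satisfies every constraint emitted by LINEARIZE. As $P_2$ retains the constraints $\rho_i\succeq 0,\ \trace(\rho_i)=1$, the solution $\rho^*$ is a genuine product-state assignment, hence $t_k(\rho^*)\leq\optprod$; it therefore suffices to bound $\abs{f(\rho^*)-t_k(\rho^*)}=\big|\sum_{ij}\trace(\sigma_j\rho_i^*)(e_{ij}-t_{k-1}^{ij}(\rho^*))\big|$. I would split $\abs{e_{ij}-t_{k-1}^{ij}(\rho^*)}$ by the triangle inequality into (i) the distance between $e_{ij}$ and the \emph{linearization} of $t_{k-1}^{ij}$ evaluated at $\rho^*$, which is precisely the slack certified by the corresponding $P_2$-constraint of the form ``$L-\epsilon' d^2n^b\leq\sum_{ij}\trace(\sigma_j\rho_i)e_{ij}\leq U+\epsilon' d^2n^b$'' in Alg.~\ref{alg:linearize}, and (ii) the distance between that linearization and the true value $t_{k-1}^{ij}(\rho^*)$, which is the \emph{same} quantity one recursion level down. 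This sets up an induction on the degree $b$: the linearization error $E_b$ of a degree-$b$ inner product at any $P_2$-feasible point obeys a recurrence $E_b\leq \sqrt{2}d\cdot E_{b-1}+(\text{level-}b\text{ slack})$, the factor $\sqrt{2}d$ being again the Cauchy--Schwarz bound summed against the $n$ outer indices. Unrolling this geometric recurrence across the $k-1$ levels yields exactly the weighted sum $\sum_{m=1}^{k-1}(\sqrt{2}d)^{k-1-m}\epsilon_m$, while the leading prefactor $d(d+\sqrt{2})=d^2+\sqrt{2}d$ emerges at the top level from combining the $d^2$ in the constraint slack $\epsilon' d^2n^b$ with the $\sqrt{2}d$ from evaluating the objective coefficients.

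The main obstacle is the bookkeeping of this recursive error propagation in the upper bound. One must verify that the constraint LINEARIZE emits at each recursive level genuinely certifies that the linearization of $t_b$ at \emph{every} $P_2$-feasible $\rho^*$ (not merely at $\rho^{\optprod}$) stays within the correct interval around $e_{ij}$, so that step (i) is legitimate; and one must track how the normalized error $\epsilon_m$ introduced at level $m$ — itself already a geometric sum through $(\Delta^m-1)/(\Delta-1)$ — interacts with the extra factor $(\sqrt{2}d)^{k-1-m}$ gained on the way up, taking care that the slacks chosen in Alg.~\ref{alg:linearize} (the $\epsilon'$ and $\epsilon' d^2n^b$ terms) are exactly calibrated so the two contributions telescope into the stated expression. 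Finally, a union bound over all $O(d^{2k}n^{k})$ invocations of EVAL keeps the probabilistic guarantee uniform at $1-d^{2k}n^{k-f}$, matching Lems.~\ref{l:feasible} and~\ref{l:evalbound}.
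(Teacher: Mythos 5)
Your proposal is correct and follows essentially the same route as the paper: the paper's own proof rests on the identical induction over the recursive constraint structure (its Eqn.~(\ref{eqn:miniclaim})), showing that at \emph{any} $P_2$-feasible point each $t_b$ lies within its constraint interval up to error $d(d+\sqrt{2})\left[\sum_{m=1}^{b-1}(\sqrt{2}d)^{b-1-m}\epsilon_m\right]n^{b}$, combined with Lem.~\ref{l:feasible} exactly as you use it, with the per-level slack $d^2\epsilon_{b-1}+\sqrt{2}d\,\epsilon_{b-1}$ arising at every level just as in your recurrence. Your only deviations are organizational and harmless: you make the two one-sided bounds explicit (the paper is terse about the direction $\optt\leq\optprod+\cdots$, which, as you correctly note, requires $t_k(\rho^*)\leq\optprod$ for the $P_2$-optimizer $\rho^*$ via the retained constraints $\rho_i\succeq 0$, $\trace(\rho_i)=1$), and your lower bound invokes Lem.~\ref{l:evalbound} directly rather than routing through the constraint cascade, which yields an even tighter slack of $\sqrt{2}d\,\epsilon_{k-1}n^k$ for that direction.
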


\subsection{The final algorithm}\label{sscn:finalptas}

We finally present our approximation algorithm, APPROXIMATE (Alg.~\ref{alg:final}), in its entirety, which exploits our ability to linearize $P_1$ using LINEARIZE (Alg.~\ref{alg:linearize}). This proves Thm~\ref{thm:ptas}, which in turn implies Thm.~\ref{thm:ptasdense}. We first clarify a few points about APPROXIMATE, then analyze its runtime, and follow with further discussion, including the algorithm's derandomization and a proof that dense \klh~remains QMA-hard.

\begin{figure}[t]
\noindent\rule{\linewidth}{0.3mm}
\begin{alg}{\rm APPROXIMATE}($H$ , $\epsilon${\rm )}.\label{alg:final}
    \begin{itemize}
        \item Input: (1) A $k$-local Hamiltonian $H=\sum_{i_1, \ldots ,i_k} H_{i_1,\ldots,i_k}$ for each $H_{i_1,\ldots,i_k}\in H((\complex^d)^k)$.\\
               \mbox{\hspace{11mm}}(2) An error parameter $\epsilon>0$.
        \item Output: A product assignment $\rho_1\otimes\cdots\otimes\rho_n$ that with probability at least $1/2$, has value at least\\
              \mbox{\hspace{14mm}}$\optprod-\epsilon n^k$, for $\optprod$ the optimal value for $H$ over all product state assignments.
    \end{itemize}
    \begin{compactenum}
        \item Set $\epssdp:=\eps/10$.
        \item Define $h:\reals\rightarrow\reals$ such that for any error parameter $\epsilon$ input to LINEARIZE, $h(\eps)n^k$ is the absolute value of the bound on additive error given by Lem.~\ref{l:guarantee}. Then, define $\epsilon^\prime$ implicitly so that $h(\epsilon^\prime) + \epssdp=\epsilon$ holds.
        \item Define constant $f$ such that $1-d^{2k}n^{k-f}>1/2$.
        \item Define constants $g$ and $\delta$ implicitly so that $\epsilon^\prime=\cc\left(\sqrt{\frac{f}{g}}+\delta\right)\left(\frac{\Delta^{k}-1}{\Delta-1}\right)$, for $\Delta$ defined in Eqn.~(\ref{eqn:alg1error}).
        \item Choose $g\log n$ indices $S\subseteq\set{1,\ldots,n}$ independently and uniformly at random.
        \item For each possible assignment $i$ from our $\delta$-net to the qudits in $S$:
        \begin{compactenum}
            \item Call LINEARIZE$(t_k,\set{P_1\mbox{'s constraints}},S,i, \epsilon^\prime)$ to obtain SDP $P_2^i$.
            \item Let $\alpha_i$ denote the value of $P_1$ obtained by substituting in the optimal solution of $P_2^i$.
        \end{compactenum}
        \item Return the assignment corresponding to the maximum over all $\alpha_i$.
    \end{compactenum}
\end{alg}
\noindent\rule{\linewidth}{0.3mm}
\end{figure}

We begin by explaining the rationale behind the constants in Alg.~\ref{alg:final}. The constant $\epssdp$ is the additive error incurred when solving an SDP~\cite{GLS93}. We choose $\epsilon^\prime$ so that after running LINEARIZE and solving $P_2^i$, the total additive error is at most $\epsilon$, as desired. We choose $f$ to ensure the probability of success is at least $1/2$. Finally, we set $g$ large enough and $\delta$ (for our $\delta$-net) small enough to ensure that $\epsilon^\prime$ matches the error bounds for EVAL in Lem.~\ref{l:evalbound}.

We now analyze the runtime of Alg.~\ref{alg:final}. Let $\abs{G}$ denote the size of our $\delta$-net $G$ for a qudit. Then, for each of the $\abs{G}^{g\log n}$ iterations of line 6, we first take $O(n^{k-1})$ time to run LINEARIZE, outputting $O(n^{k-1})$ new linear constraints (seen via a simple inductive argument). We then solve SDP $P_2^i$, which can be done in time polynomial in $n$ and $\log(1/\epssdp)$ using the ellipsoid method~\cite{GLS93} (see, e.g.,~\cite{W09}). Let $r(n,\epssdp)$ denote the maximum runtime required to solve any of the $P_2^i$. Then, the overall runtime for Alg.~\ref{alg:final} is $O(n^{g\log\abs{G}} (n^{k-1}+r(n,\epssdp)))$, which is polynomial in $n$ for $\epsilon,d,k\in O(1)$ (recall from Sec.~\ref{scn:sepopt} that $\abs{G}\in O((\frac{d}{\delta})^d)$, and that $\delta$ and $g$ are constant in our setting). Note that, due to the implicit dependence of $g$ on $\epsilon$, this runtime scales at least exponentially with varying $\epsilon$.

Before moving to further discussion, we make two remarks. First, one can convert the output of Alg.~\ref{alg:final} to a \emph{pure} state with the same guarantee by adapting the standard classical \emph{method of conditional expectations}~\cite{V01}. To demonstrate, suppose $\set{\rho_i}$ is output by Alg.~\ref{alg:final}. Then, set $\rho_1^\prime$ to be the eigenvector $\ketbra{\psi_j}{\psi_j}$ of $\rho_1$ for which the assignment $\ketbra{\psi_j}{\psi_j}\otimes\rho_2\otimes\cdots\otimes\rho_n$ performs best\footnote{If the spectrum of $\rho_i$ is degenerate, begin by fixing an arbitrary choice of spectral decomposition for $\rho_i$.} for $P_1$. Let our new assignment be $\rho_1^\prime\otimes\rho_2\otimes\cdots\otimes\rho_n$. Now repeat for each $\rho_i$ for $2\leq i \leq n$. The final state $\rho_1^\prime\otimes\cdots\otimes\rho_n^\prime$ is pure, and by convexity is guaranteed to perform as well as $\rho_1\otimes\cdots\otimes\rho_n$.

Second, recall from Sec.~\ref{scn:sepopt} that we constructed a $\delta$-net over a space larger than $D(\complex^d)$, allowing possibly non-positive assignments for a qudit. We now see that this is of no consequence, since regardless of which samples (positive or not) we use to derive our estimates with the Sampling Lemma, any feasible solution to $P_2^i$ in Alg.~\ref{alg:final} is a valid assignment for $P_1$. Moreover, we know that for each optimal $\rho_i$ for $P_1$, there must be \emph{some} operator (positive or not) within distance $\delta$ in our net, ensuring our estimates obtained using the Sampling Lemma are within our error bounds.

\paragraph{Converting the absolute error of Algorithm~\ref{alg:final} into relative error.}
To convert the absolute error $\pm\epsilon n^k$ of Alg.~\ref{alg:final} into a \emph{relative} error of $1-\epsilon^\prime$ for any $\epsilon^\prime$, define constant $c$ such that $c n^k$ is the value obtained for a \klh~instance by choosing the maximally mixed assignment $I/d^n$ (analogous to a classical random assignment). Since $I/d^n$ can be written as a mixture of computational basis states, we have $\optprod \geq c n^k$. It follows that by setting $\epsilon= c\epsilon^\prime$, Alg.~\ref{alg:final} returns an assignment with value at least $\optprod-c\epsilon^\prime n^k\geq \optprod-\epsilon^\prime\optprod\geq \optprod(1-\epsilon^\prime)$, as desired.
\paragraph{Derandomizing Algorithm~\ref{alg:final}.}

The source of randomness in our algorithm is Lem.~\ref{l:sample}. By a standard argument in~\cite{AKK99} (see also~\cite{BR94,BGG93}), this randomness can be eliminated with only polynomial overhead. Specifically, we replace the random selection of $g\log n$ indices in the Sampling Lemma with the set of indices encountered on a random walk of length $O(g\log n)$ along a constant degree expander~\cite{G93}. Since the expander has constant degree, we can efficiently deterministically iterate through all $n^{O(g)}$ such walks, and since such a walk works with probability $1/n^{O(1)}$, at least one walk will work for all $\poly(n)$ sampling experiments we wish to run.

\paragraph{QMA-hardness of dense \klh.}
It is easy to see that (exact) MAX-$2$-local Hamiltonian remains QMA-hard for dense instances (a similar statement holds for MAX-$2$-SAT~\cite{AKK99}). For any MAX-$2$-local Hamiltonian instance with optimal value $\OPT$, we simply add $n$ qudits, between any two of which we place the constraint $\ketbra{00}{00}$ (no constraints are necessary between old and new qudits). Then, the new Hamiltonian has optimal value $\OPT+{n\choose{2}}$, making it dense, and the ability to solve this new instance implies the ability to solve the original one. The argument extends straightforwardly to \klh~for $k>2$.

%\paragraph{Derandomizing Algorithm~\ref{alg:final}.}
%
%The source of randomness in our algorithm is Lem.~\ref{l:sample}. By a standard argument in~\cite{AKK99} (see also~\cite{BR94,BGG93}), this randomness can be eliminated with only polynomial overhead. Specifically, we replace the random selection of $g\log n$ indices in the Sampling Lemma with the set of indices encountered on a random walk of length $O(g\log n)$ along a constant degree expander~\cite{G93}. Since the expander has constant degree, we can efficiently deterministically iterate through all $n^{O(g)}$ such walks, and since such a walk works with probability $1/n^{O(1)}$, at least one walk will work for all $\poly(n)$ sampling experiments we wish to run.
%
%\paragraph{QMA-hardness of dense \klh.}
%It is easy to see that (exact) MAX-$2$-local Hamiltonian remains QMA-hard for dense instances (a similar statement holds for MAX-$2$-SAT~\cite{AKK99}). For any MAX-$2$-local Hamiltonian instance with optimal value $\OPT$, we simply add $n$ qudits, between any two of which we place the constraint $\ketbra{00}{00}$ (no constraints are necessary between old and new qudits). Then, the new Hamiltonian has optimal value $\OPT+{n\choose{2}}$, making it dense, and the ability to solve this new instance implies the ability to solve the original one. The argument extends straightforwardly to \klh~for $k>2$.

\section{Acknowledgements}
We thank Jamie Sikora and Sarvagya Upadhyay for helpful feedback on an earlier version of this draft, and Yi-Kai Liu for interesting discussions. We wish to especially thank Oded Regev for many helpful comments and suggestions, and Richard Cleve for bringing our attention to the method of conditional expectations, and for stimulating discussions and support.

\bibliography{approx}

\newcommand{\etalchar}[1]{$^{#1}$}
\begin{thebibliography}{dlVKKV05}

\bibitem[AAI10]{AAI10}
D.~Aharonov, I.~Arad, and S.~Irani.
\newblock Efficient algorithm for approximating one-dimensional ground states.
\newblock {\em Physical Review A}, 82:012315, 2010.

\bibitem[AALV09]{AALV09}
D.~Aharonov, I.~Arad, Z.~Landau, and U.~Vazirani.
\newblock The detectibility lemma and quantum gap amplification.
\newblock In {\em Proceedings of the 41st annual ACM syposium on Theory of
  computing}, volume 287, pages 417--426, 2009.

\bibitem[Aar06]{A06}
S.~Aaronson.
\newblock The quantum {PCP} manifesto.
\newblock http://scottaaronson.com/blog/?p=139, 2006.

\bibitem[AB09]{AB90}
S.~Arora and B.~Barak.
\newblock {\em Computational Complexity: A Modern Approach}.
\newblock Cambridge University Press, 2009.

\bibitem[AdlVKK02]{AVKK02}
N.~Alon, W.~F. de~la Vega, R.~Kannan, and M.~Karpinski.
\newblock Random sampling and approximation of {MAX-CSP} problems.
\newblock In {\em Proceedings of the 34th Symposium on Theory of computing},
  pages 232--239, 2002.

\bibitem[AGIK09]{AGIK09}
D.~Aharonov, D.~Gottesman, S.~Irani, and J.~Kempe.
\newblock The power of quantum systems on a line.
\newblock {\em Communications in Mathematical Physics}, 287:41--65, 2009.

\bibitem[AKK99]{AKK99}
S.~Arora, D.~Karger, and M.~Karpinski.
\newblock Polynomial time approximation schemes for dense instances of
  {NP}-hard problems.
\newblock {\em Journal of Computer and System Sciences}, 58:193--210, 1999.

\bibitem[ALM{\etalchar{+}}98]{ALMSS98}
S.~Arora, C.~Lund, R.~Motwani, M.~Sudan, and M.~Szegedy.
\newblock Proof verification and the hardness of approximation problems.
\newblock {\em Journal of the ACM}, 45(3):501--555, 1998.

\bibitem[AM08]{AM08}
P.~Austrin and E.~Mossel.
\newblock Approximation resistant predicates from pairwise independence.
\newblock In {\em Proceedings of the 23rd IEEE Conference on Computational
  Complexity}, pages 249--258, 2008.

\bibitem[AS98]{AS98}
S.~Arora and S.~Safra.
\newblock Probabilistic checking of proofs: A new characterization of {NP}.
\newblock {\em Journal of the ACM}, 45(1):70--122, 1998.

\bibitem[AvDK{\etalchar{+}}07]{ADKLLR07}
D.~Aharanov, W.~van Dam, J.~Kempe, Z.~Landau, S.~Lloyd, and O.~Regev.
\newblock Adiabatic quantum computation is equivalent to standard quantum
  computation.
\newblock {\em SIAM Journal of Computing}, 37(1):166--194, 2007.

\bibitem[BBT09]{BBT09}
N.~Bansal, S.~Bravyi, and B.~M. Terhal.
\newblock Classical approximation schemes for the ground-state energy of
  quantum and classical {I}sing spin {H}amiltonians on planar graphs.
\newblock {\em Quantum Information and Computation}, 9(7\&8):0701--0720, 2009.

\bibitem[BdlVK03]{BVK03}
C.~Bazgan, W.~F. de~la Vega, and M.~Karpinski.
\newblock Polynomial time approximation schemes for dense instances of minimum
  constraint satisfaction.
\newblock {\em Random Structures \& Algorithms}, 23(1):73--91, 2003.

\bibitem[BDOT08]{BDOT08}
S.~Bravyi, D.~P. Divincenzo, R.~Oliveira, and B.~M. Terhal.
\newblock The complexity of the stoquastic local {H}amiltonian problems.
\newblock {\em Quantum Information and Computation}, 8(5):0361--0385, 2008.

\bibitem[BGG93]{BGG93}
M.~Bellare, O.~Goldreich, and S.~Goldwasser.
\newblock Randomness in interactive proofs.
\newblock {\em Computational Complexity}, 3:319---354, 1993.

\bibitem[BR94]{BR94}
M.~Bellare and J.~Rompel.
\newblock Randomness-efficient oblivious sampling.
\newblock In {\em Proceedings of the 28th Symposium on the Foundations of
  Computer Science}, pages 276---287, 1994.

\bibitem[Bra06]{B06}
S.~Bravyi.
\newblock Efficient algorithm for a quantum analogue of 2-{SAT}.
\newblock {\em arXiv:quant-ph/0602108v1}, 2006.

\bibitem[BS07]{BS07}
S.~Beigi and P.~W. Shor.
\newblock On the complexity of computing zero-error and {H}olevo capacity of
  quantum channels.
\newblock {\em arXiv:0709.2090v3}, 2007.

\bibitem[BV05]{BV05}
S.~Bravyi and M.~Vyalyi.
\newblock Commutative version of the local {H}amiltonian problem and common
  eigenspace problem.
\newblock {\em Quantum Information and Computation}, 5(3):187--215, 2005.

\bibitem[CMM07]{CMM07}
M.~Charikar, K.~Makarychev, and Y.~Makarychev.
\newblock Near-optimal algorithms for maximum constraint satisfaction problems.
\newblock In {\em Lecture Notes in Computer Science}, volume 4627, pages
  149--163, 2007.

\bibitem[CV09]{CV09}
J.~I. Cirac and F.~Verstraete.
\newblock Renormalization and tensor product states in spin chains and
  lattices.
\newblock {\em Journal of Physics A: Mathematical and Theoretical},
  42(50):504004, 2009.

\bibitem[dlV96]{V96}
W.~F. de~la Vega.
\newblock {MAX-CUT} has a randomized approximation scheme in dense graphs.
\newblock {\em Random Structures \& Algorithms}, 8(3):187--198, 1996.

\bibitem[dlVK00]{VK00}
W.~F. de~la Vega and M.~Karpinski.
\newblock Polynomial time approximation of dense weighted instances of
  {MAX-CUT}.
\newblock {\em Random Structures \& Algorithms}, 16:314--332, 2000.

\bibitem[dlVKKV05]{VKKV05}
W.~F. de~la Vega, M.~Karpinski, R.~Kannan, and S.~Vempala.
\newblock Tensor decomposition and approximation schemes for constraint
  satisfaction problems.
\newblock In {\em Proceedings of the 37th Symposium on Theory of computing},
  pages 747--754. ACM Press, 2005.

\bibitem[DPS04]{doherty04a}
A.~C. Doherty, P.~A. Parrilo, and F.~M. Spedalieri.
\newblock Complete family of separability criteria.
\newblock {\em Physical Review A}, 69:022308, 2004.

\bibitem[FK96]{FK96}
A.~M. Frieze and R.~Kannan.
\newblock The regularity lemma and approximation schemes dense problems.
\newblock In {\em Proceedings of the 37th Annual Symposium on Foundations of
  Computer Science}, pages 12--20, 1996.

\bibitem[GGR98]{GGR98}
O.~Goldreich, S.~Goldwasser, and D.~Ron.
\newblock Property testing and its connection to learning and approximation.
\newblock {\em Journal of the ACM}, 45(4):653--750, 1998.

\bibitem[Gil93]{G93}
D.~Gillman.
\newblock A {C}hernoff bound for random walks on expanders.
\newblock In {\em Proceedings of the 34th Symposium on the Foundations of
  Computer Science}, pages 680---691, 1993.

\bibitem[GK1]{GK10}
We briefly discuss why it is not necessary to have a {\emph{dense}
  \klh~instance} in order to apply the Sampling Lemma: Specifically, observe
  that the Sampling Lemma assumes there are $n$ terms in the sum to be
  estimated, and that we are able to determine $s$ of them. Looking back at
  Eqn.~(\ref{eqn:introprogram}) and considering, say, qudit $i$, if we wish to
  use the Sampling Lemma to estimate the inner sum over neighbours $N(i)$ of
  $i$, we might run into a problem if $i$ does \emph{not} have $\Theta(n)$
  neighbours. To circumvent this~\cite{AKK99}, observe that Lem.~\ref{l:sample}
  only gives us an estimate to within $\pm \epsilon n$. Thus, if $N(i)\leq
  \epsilon n/10$ (say), then we do not use the Sampling Lemma, but rather let
  our estimate be simply $0$, which is guaranteed to fall within the desired
  error bounds (observe an estimate of $0$ does not necessarily work, on the
  other hand, if $N(i)$ is large (say $N(i)=n-1$), since typically $f/g<1$).
  Throughout the remainder of our discussion, we assume this cutoff principle
  is implicitly present when employing Lem.~\ref{l:sample}.

\bibitem[GLS93]{GLS93}
M.~Gr\"{o}tschel, L.~Lov{\`a}sz, and A.~Schrijver.
\newblock {\em Geometric Algorithms and Combinatorial Optimization}.
\newblock Springer-Verlag, 1993.

\bibitem[H{\aa}s97]{Ha97}
J.~H{\aa}stad.
\newblock Some optimal inapproximability results.
\newblock In {\em Proceedings of the 29th Symposium on Theory of computing},
  pages 1--10, 1997.

\bibitem[Has05]{H05}
G.~Hast.
\newblock Approximating {Max kCSP} - outperforming a random assignment with
  almost a linear factor.
\newblock In {\em Proceedings of the 32nd International Colloquium on Automata,
  Languages, and Programming}, pages 956--968, 2005.

\bibitem[H{\aa}s07]{H07}
J.~H{\aa}stad.
\newblock On the approximation resistance of a random predicate.
\newblock In {\em Lecture Notes in Computer Science}, volume 4627, pages
  149--163, 2007.

\bibitem[Hoc97]{H97}
D.~Hochbaum.
\newblock {\em Approximation Algorithms for {NP}-Hard Problems}.
\newblock Wadsworth Publishing Company, 1997.

\bibitem[H{\"{o}}f64]{H64}
W.~H{\"{o}}ffding.
\newblock Probability inequalities for sums of bounded random variables.
\newblock {\em Journal of the {A}merican {S}tatistical {A}ssociation},
  58(301):13--30, 1964.

\bibitem[JGL10]{JGL09}
S.~P. Jordan, D.~Gosset, and P.~J. Love.
\newblock {QMA}-complete problems for stoquastic {H}amiltonians and {M}arkov
  matrices.
\newblock {\em Physical Review A}, 81(3):032331, 2010.

\bibitem[JJUW10]{JJUW10}
R.~Jain, Z.~Ji, S.~Upadhyay, and J.~Watrous.
\newblock {QIP=PSPACE}.
\newblock In {\em Proceedings of the 42nd Symposium on Theory of Computing},
  pages 573---581, 2010.

\bibitem[Kho02]{K02}
S.~Khot.
\newblock On the power of unique 2-prover 1-round games.
\newblock In {\em Proceedings of the 34th Symposium on Theory of computing},
  pages 767---775, 2002.

\bibitem[Kim03]{K03}
G.~Kimura.
\newblock The {B}loch vector for {N}-level systems.
\newblock {\em Physics Letters A}, 314(5), 2003.

\bibitem[KKR06]{KKR06}
J.~Kempe, A.~Kitaev, and O.~Regev.
\newblock The complexity of the local {H}amiltonian problem.
\newblock {\em SIAM Journal of Computing}, 35(5):1070--1097, 2006.

\bibitem[KR03]{KR03}
J.~Kempe and O.~Regev.
\newblock 3-local {H}amiltonian is {QMA}-complete.
\newblock {\em Quantum Information and Computation}, 3(3):258--264, 2003.

\bibitem[KSTW01]{KSTW01}
S.~Khanna, M.~Sudan, L.~Trevisan, and D.~Williamson.
\newblock The approximability of constraint satisfaction problems.
\newblock {\em SIAM Journal of Computing}, 30(6):1863--1920, 2001.

\bibitem[KSV02]{KSV02}
A.~Kitaev, A.~Shen, and M.~Vyalyi.
\newblock {\em Classical and Quantum Computation}.
\newblock American Mathematical Society, 2002.

\bibitem[LCV07]{LCV07}
Y.-K. Liu, M.~Christandl, and F.~Verstraete.
\newblock Quantum computational complexity of the {N}-representability problem:
  {QMA} complete.
\newblock {\em Physical Review Letters}, 98:110503, 2007.

\bibitem[Liu06]{L06}
Y.-K. Liu.
\newblock Consistency of local density matrices is {QMA}-complete.
\newblock In {\em Lecture Notes in Computer Science}, volume 4110, pages
  438--449, 2006.

\bibitem[LLM{\etalchar{+}}10]{LLMSS10}
C.~.R. Laumann, A.~M. L{\"{a}}uchli, R.~Moessner, A.~Scardicchio, and S.~L.
  Sondhi.
\newblock Product, generic, and random generic quantum satisfiability.
\newblock {\em Physical Review A}, 81:062345, 2010.

\bibitem[LLZ02]{LLZ02}
M.~Lewin, D.~Livnat, and U.~Zwick.
\newblock Improved rounding techniques for {MAX 2-SAT} and {MAX DI-CUT}
  problems.
\newblock In {\em Proceedings of the 9th International IPCO Conference on
  Integer Programming and Combinatorial Optimization}, pages 67--82, 2002.

\bibitem[LMRS10]{LMRS10}
T.~Lee, R.~Mittal, B.~W. Reichardt, and R.~Spalek.
\newblock An adversary for algorithms.
\newblock {\em arXiv:1011.3020v1}, 2010.

\bibitem[{\"{O}}R95]{OR95}
S.~{\"{O}}stlund and S.~Rommer.
\newblock Thermodynamic limit of density matrix renormalization.
\newblock {\em Physical Review Letters}, 75:3537--3540, 1995.

\bibitem[OT08]{OT05}
R.~Oliveira and B.~M. Terhal.
\newblock The complexity of quantum spin systems on a two-dimensional square
  lattice.
\newblock {\em Quantum Information and Computing}, 8(10):0900--0924, 2008.

\bibitem[PWKE98]{PWKH98}
I.~Peschel, X.~Wang, M.~Kaulke, and K.~Hallberg (Edgs.).
\newblock Density-matrix renormalization - a new numerical method in physics.
\newblock In {\em Lecture Notes in Physics}, volume 528. Springer-Verlag, 1998.

\bibitem[R{\"{O}}97]{RO97}
S.~Rommer and S.~{\"{O}}stlund.
\newblock Class of ansatz wave functions for one-dimensional spin systems and
  their relation to the density matrix renormalization group.
\newblock {\em Physical Review B}, 55:2164--2181, 1997.

\bibitem[Ros09]{R09}
B.~Rosgen.
\newblock Testing non-isometry is {QMA}-complete.
\newblock {\em arXiv:0910.3740v2}, 2009.

\bibitem[SC10]{SC10}
N.~Schuch and J.~I. Cirac.
\newblock Matrix product state and mean-field solutions for one-dimensional
  systems can be found efficiently.
\newblock {\em Physical Review A}, 82:012314, 2010.

\bibitem[Sch05]{S05}
U.~Schollw{\"{o}}ck.
\newblock The density-matrix renormalization group.
\newblock {\em Reviews of Modern Physics}, 77:259--315, 2005.

\bibitem[ST06]{ST06}
A.~Samorodnitsky and L.~Trevisan.
\newblock Gowers uniformity, influences of variables, and pcps.
\newblock In {\em Proceedings of the 38th Symposium on Theory of computing},
  pages 11--20, 2006.

\bibitem[SV09]{SV09}
N.~Schuch and F.~Verstraete.
\newblock Computational complexity of interacting electrons and fundamental
  limitations of density functional theory.
\newblock {\em Nature Physics}, 5:732--735, 2009.

\bibitem[Tre98]{T98}
L.~Trevisan.
\newblock Parallel approximation algorithms by positive linear programming.
\newblock {\em Algorithmica}, 21(1):72--88, 1998.

\bibitem[TWP09]{TWP09}
S.~Tamaryan, T.-C. Wei, and D.~Park.
\newblock Maximally entangled three-qubit states via geometric measure of
  entanglement.
\newblock {\em Physical Review A}, 80:052315, 2009.

\bibitem[Vaz01]{V01}
V.~Vazirani.
\newblock {\em Approximation Algorithms}.
\newblock Springer, 2001.

\bibitem[Wat09]{W09}
J.~Watrous.
\newblock Semidefinite programs for completely bounded norms.
\newblock {\em Theory of Computing}, 5:217---238, 2009.

\bibitem[WG03]{WG03}
T.-C. Wei and P.~M. Goldbart.
\newblock Geometric measure of entanglement and applications to bipartite and
  multipartite quantum states.
\newblock {\em Physical Review A}, 68:042307, 2003.

\bibitem[Whi92]{W92}
S.~R. White.
\newblock Density matrix formulation for quantum renormalization groups.
\newblock {\em Physical Review Letters}, 69:2863--2866, 1992.

\bibitem[Whi93]{W93}
S.~R. White.
\newblock Density-matrix algorithms for quantum renormalization groups.
\newblock {\em Physical Review B}, 48:10345--10356, 1993.

\bibitem[WMN10]{WMN10}
T.-C. Wei, M.~Mosca, and A.~Nayak.
\newblock Interacting boson problems can be {QMA} hard.
\newblock {\em Physical Review Letters}, 104:040501, 2010.

\end{thebibliography}

\appendix
\section{Technical proofs for Section \ref{scn:sepopt}}\label{app:A}

\paragraph{Expanded Notation.} We now expand on our previous notation for analyzing Eqn.~(\ref{eqn:decomposition}) in order to facilitate proofs of the claims in Sec.~\ref{scn:sepopt}. First, to recursively analyze a clause $H_{i_1,\ldots,i_k}\subseteq H((\complex^{d})^{\otimes k})$, let $H_b\in H((\complex^{d})^{\otimes b})$ for any $1\leq b \leq k$ denote the action of $H_{i_1,\ldots,i_k}$ restricted to the first $b$ of its $k$ target qubits, i.e. $H_b:=\sum^{d^2}_{j_b,\ldots,j_1}r_{j_1,\ldots,j_k}^{i_1,\ldots,i_k}\sigma_{j_{b}}\otimes\cdots\otimes\sigma_{j_{1}}$. For example, $H_1=\sum^{d^2}_{j_1}r_{j_1,\ldots,j_k}^{i_1,\ldots,i_k}\sigma_{j_{1}}$ and  $H_k=H_{i_1,\ldots,i_k}$. Note that $H_b$ implicitly depends on variables ${i_1,\ldots,i_k,j_{b+1},\ldots, j_{k}}$, but to reduce clutter, our notation does not explicitly denote this dependence unless necessary. Next, to recursively analyze a degree-$a$ inner product, we define $t_{a,b}:H(\complex^d)^{\times n}\mapsto\reals$ for any $0\leq a \leq k$ and $1\leq b \leq k$ such that $t_{a,b}(\rho_1,\ldots,\rho_n):=\sum^n_{i_{a},\ldots,i_1}\trace\left(H_b^{i_1,\ldots,i_k}\rho_{i_{b}}\otimes\cdots\otimes\rho_{i_1}\right)$ (where setting $a=0$ eliminates the sum over indices $i$). For example, $t_{k,k}$ is our full ``degree-$k$'' objective function in Eqn.~(\ref{eqn:obj}), and more generally, $t_{b,b}$ is the degree-b inner product in Eqn.~(\ref{eqn:decomposition}). Allowing different values for $a$ and $b$ greatly eases our technical analysis. We use the shorthand $t_b$ to denote $t_{b,b}$, and again only explicitly denote the dependence of $t_{a,b}$ on parameters $i_{a+1},\ldots, i_k$ and $j_{b+1},\ldots , j_k$ when necessary.

\begin{lemma}\label{l:ubound}
    Let $\set{\rho_i}_{i=1}^n\subseteq H(\complex^d)$. For any \klh~instance  $\set{H_{i_1,\ldots,i_k}}\subseteq H(\complex^{d^k})$ with decomposition for the $H_{i_1,\ldots,i_k}$ as given in Eqn.~(\ref{eqn:decomposition}), we have for any $0\leq a \leq k$ and $1\leq b \leq k$ that $\abs{t_{a,b}(\rho_1,\ldots,\rho_n)}\leq \left(\max_{i_{b},\ldots,i_1}\fnorm{\rho_{i_{b}}}\cdots\fnorm{\rho_{i_{1}}}\right)\cc n^{a}$.
\end{lemma}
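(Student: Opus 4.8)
\textbf{Proof plan for Lemma~\ref{l:ubound}.}
The plan is to bound $t_{a,b}$ term-by-term and reduce everything to a single Cauchy--Schwarz estimate on a Hilbert--Schmidt inner product. First I would apply the triangle inequality to the defining expression
\[
t_{a,b}(\rho_1,\ldots,\rho_n)=\sum_{i_1,\ldots,i_a}^{n}\trace\left(H_b^{i_1,\ldots,i_k}\,\rho_{i_b}\otimes\cdots\otimes\rho_{i_1}\right),
\]
which is a sum of exactly $n^a$ summands (a single term when $a=0$), to obtain $\abs{t_{a,b}}\leq\sum_{i_1,\ldots,i_a}^n\abs{\trace(H_b\,\rho_{i_b}\otimes\cdots\otimes\rho_{i_1})}$. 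Since $H_b$ and $\rho_{i_b}\otimes\cdots\otimes\rho_{i_1}$ are Hermitian, Cauchy--Schwarz for the Hilbert--Schmidt inner product gives $\abs{\trace(H_b\,\rho_{i_b}\otimes\cdots\otimes\rho_{i_1})}\leq\fnorm{H_b}\,\fnorm{\rho_{i_b}\otimes\cdots\otimes\rho_{i_1}}$, and multiplicativity of the Frobenius norm under tensor products turns the second factor into $\fnorm{\rho_{i_b}}\cdots\fnorm{\rho_{i_1}}$. Bounding this product uniformly by $\max_{i_b,\ldots,i_1}\fnorm{\rho_{i_b}}\cdots\fnorm{\rho_{i_1}}$ (valid regardless of whether each index is summed or frozen, since the max ranges over all of $\set{1,\ldots,n}$) and counting the $n^a$ summands reduces the whole statement to the single claim $\fnorm{H_b}\leq\cc$.

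I expect this last claim to be the main (if mild) obstacle, since $H_b$ is only a fragment of the clause $H_{i_1,\ldots,i_k}$---it retains the first $b$ Pauli factors while the indices $j_{b+1},\ldots,j_k$ are frozen---and so it inherits no operator-norm bound directly. The plan is to pass through the orthonormal basis $\set{\sigma_j/\sqrt{2}}$: because $\trace(\sigma_i\sigma_j)=2\delta_{ij}$, the products $\sigma_{j_b}\otimes\cdots\otimes\sigma_{j_1}$ are mutually Frobenius-orthogonal with squared norm $2^b$, so from the expansion $H_b=\sum_{j_1,\ldots,j_b}r^{i_1,\ldots,i_k}_{j_1,\ldots,j_k}\,\sigma_{j_b}\otimes\cdots\otimes\sigma_{j_1}$ I get $\fnorm{H_b}^2=2^b\sum_{j_1,\ldots,j_b}\abs{r^{i_1,\ldots,i_k}_{j_1,\ldots,j_k}}^2$. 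Applying the same identity to the full clause gives $\fnorm{H_{i_1,\ldots,i_k}}^2=2^k\sum_{j_1,\ldots,j_k}\abs{r^{i_1,\ldots,i_k}_{j_1,\ldots,j_k}}^2$, and since the sum defining $\fnorm{H_b}^2$ is a partial sum (over a subset of the nonnegative terms, with $j_{b+1},\ldots,j_k$ fixed) of the one defining $\fnorm{H_{i_1,\ldots,i_k}}^2$, I obtain $\fnorm{H_b}^2\leq 2^{b-k}\fnorm{H_{i_1,\ldots,i_k}}^2$.

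To finish I would invoke the hypotheses built into the \klh~definition: each clause satisfies $\snorm{H_{i_1,\ldots,i_k}}\leq 1$ and acts on $\complex^{d^k}$, so $\fnorm{H_{i_1,\ldots,i_k}}\leq\sqrt{d^k}\,\snorm{H_{i_1,\ldots,i_k}}\leq d^{k/2}=\cc$. Combining with the previous bound yields $\fnorm{H_b}^2\leq 2^{b-k}d^k\leq d^k$ using $b\leq k$, i.e.\ $\fnorm{H_b}\leq\cc$, which closes the argument and gives $\abs{t_{a,b}}\leq\left(\max_{i_b,\ldots,i_1}\fnorm{\rho_{i_b}}\cdots\fnorm{\rho_{i_1}}\right)\cc\, n^a$. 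I would note that the estimate on $\fnorm{H_b}$ is uniform over the frozen indices $i_{a+1},\ldots,i_k$ and $j_{b+1},\ldots,j_k$ on which $t_{a,b}$ implicitly depends, so no additional bookkeeping is needed there, and that the argument uses neither positivity of the $H_{i_1,\ldots,i_k}$ nor of the $\rho_i$---consistent with the lemma being stated for arbitrary Hermitian $\rho_i$.
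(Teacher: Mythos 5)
Your proposal is correct and follows essentially the same route as the paper's proof: triangle inequality, then Cauchy--Schwarz (the paper phrases it as H\"{o}lder for Schatten norms) with multiplicativity of the Frobenius norm over tensor products, and finally the bound $\fnorm{H_b}\leq 2^{(b-k)/2}\fnorm{H_{i_1,\ldots,i_k}}\leq\cc$ obtained from orthogonality of the $\sigma_j$-basis and $\snorm{H_{i_1,\ldots,i_k}}\leq 1$. Your write-up is, if anything, slightly cleaner in stating the partial-sum-versus-full-sum step and the exponent $2^{b-k}$ explicitly.
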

\begin{proof}[\textbf{Proof of Lem.~\ref{l:ubound}}]
     By the triangle inequality and the H\"{o}lder inequality for Schatten $p$-norms, we have
    \begin{eqnarray*}
        \abs{t_{a,b}}=\abs{\sum^n_{i_{a},\ldots,i_1}\trace\left(H_b\rho_{i_{b}}\otimes\cdots\otimes\rho_{i_1}\right)}&\leq& \sum^n_{i_{a},\ldots,i_1}\fnorm{H_b}\fnorm{\rho_{i_{b}}\otimes\cdots\otimes\rho_{i_1}}\\
        &\leq& \left(\max_{i_{b},\ldots,i_1}\fnorm{\rho_{i_{b}}}\cdots\fnorm{\rho_{i_{1}}}\right) \sum^n_{i_{a},\ldots,i_1}\fnorm{H_b},
    \end{eqnarray*}
    where we have used the fact that $\fnorm{A\otimes B}=\fnorm{A}\fnorm{B}$ for all $A,B\in L(\complex^d)$. If we can now show that $\fnorm{H_b}\leq \fnorm{H_k}$ for all $1\leq b \leq k$, then we would be done since we would have $\sum^n_{i_{a},\ldots,i_1}\fnorm{H_b}\leq \fnorm{H_{k}}n^{a}\leq \cc n^a$, where $\fnorm{H_{k}}\leq \cc$ since $\snorm{H_k}\leq 1$ by definition. Indeed, we claim that for any fixed $1\leq b\leq k$, we have $\fnorm{H_b}\leq 2^{\frac{b}{2}}\fnorm{H_{k}}$. To see this, note by straightforward expansion of the Frobenius norm and the fact that $\trace(\sigma_i\sigma_j)=2\delta_{ij}$ that
    \[
        \fnorm{H_b} = \sqrt{\trace(H_b^2)}= 2^{\frac{b}{2}}\sqrt{\sum_{j_{b},\ldots,j_k}(r_{j_1,\ldots,j_k}^{i_1,\ldots,i_k})^2}\leq2^{\frac{b}{2}}\enorm{\ve{r}^{i_1,\ldots,i_k}}=2^{\frac{b-k}{2}}\left(2^{\frac{k}{2}}\enorm{\ve{r}^{i_1,\ldots,i_k}}\right),
    \]
    where $\ve{r}^{i_1,\ldots,i_k}$ is the coordinate vector of $H_{i_1,\ldots,i_k}$ from Eqn.~(\ref{eqn:decomposition}). By the second equality in the chain above, we see that in fact $\fnorm{H_{k}}=2^{\frac{k}{2}}\enorm{\ve{r}^{i_1,\ldots,i_k}}$, completing the proof of our claim.
\end{proof}

\begin{proof}[\textbf{Proof of Lem.~\ref{l:evalbound}}]
    We first derive the error bound of $\epsilon_b$, and subsequently prove the probability bound. We follow~\cite{AKK99}, and proceed by induction on $b$. For the base case $b=1$, $\operatorname{EVAL}(H_{1},S,\set{\tilde{\rho}_i:i\in S})$ attempts to estimate
    \[
        t_1(\rho_1,\ldots,\rho_n)=\sum_{i_1}\left[\sum_{j_1}r_{j_1,\ldots,j_k}^{i_1,\ldots,i_k}\trace(\sigma_{j_1}\rho_{i_1})\right]
    \]
    using our flawed sample points $\set{\tilde{\rho}_i:i\in S}$. To analyze the error of its output, assume first that our sample points are exact, i.e. $\tilde{\rho}_i=\rho_i$ for all $i\in S$. Then, by setting ``$a_i$'' in Lem.~\ref{l:sample} to $t_{0,1}^{i_1}$ for $i=i_1$, and by using Lem.~\ref{l:ubound} with parameters $a=0$ and $b=1$ to obtain upper bound $M=\cc$, we have by the Sampling Lemma that (with probability at least $1-n^{-f}$)
    \begin{equation}\label{eqn:basecaseerror}
        \frac{n}{\abs{S}}\sum_{i_1\in S}\left[\sum_{j_1}r_{j_1,\ldots,j_k}^{i_1,\ldots,i_k}\trace(\sigma_{j_1}\rho_{i_1})\right]\in t_1(\rho_1,\ldots,\rho_n)\pm \cc\sqrt{\frac{f}{g}}n.
    \end{equation}
    This bound holds if we sum over exact sample points. If we instead sum over flawed sample points $\set{\tilde{\rho}_i:i\in S}$, the additional error is bounded by $\frac{n}{\abs{S}}$ times
    \begin{equation}\label{eqn:sampleerror}
        \abs{\sum_{i_1\in S}\left[\sum_{j_1}r_{j_1,\ldots,j_k}^{i_1,\ldots,i_k}\trace(\sigma_{j_1}(\rho_{i_1}-\tilde{\rho}_{i_1}))\right]}\leq \sum_{i_1\in S}\abs{\sum_{j_1}r_{j_1,\ldots,j_k}^{i_1,\ldots,i_k}\trace(\sigma_{j_1}(\rho_{i_1}-\tilde{\rho}_{i_1}))}\leq \sum_{i_1\in S}(\fnorm{\rho_{i_1}-\tilde{\rho}_{i_1}}\cc)\leq \cc\delta n,
    \end{equation}
    where the second inequality uses Lem.~\ref{l:ubound} with parameters $a=0$ and $b=1$ and the promise of our $\delta$-net. We conclude for the base case that
    \[
        \operatorname{EVAL}(H_{1},S,\set{\tilde{\rho}_i:i\in S})=\frac{n}{\abs{S}}\sum_{i_1\in S}\left[\sum_{j_1}r_{j_1,\ldots,j_k}^{i_1,\ldots,i_k}\trace(\sigma_{j_1}\tilde{\rho}_{i_1})\right]\in t_1(\rho_1,\ldots,\rho_n)\pm \cc\left(\sqrt{\frac{f}{g}}+\delta\right)n,
    \]
    as desired.

    Assume now that the inductive hypothesis holds for $1\leq m \leq b-1$. We prove the claim for $m=b$. To do so, suppose first that the recursive calls on line 1(b) of Alg.~\ref{alg:estimate} return the \emph{exact} values of $t_{b-1}^{ij}(\rho_1,\ldots,\rho_n)$, and that we have exact samples $\set{{\rho}_i:i\in S}$. Then, since by calling Lem.~\ref{l:ubound} with $a=b-1$ we have $\abs{\sum_{j}\trace(\sigma_{j}\rho_{i})t_{b-1}^{ij}(\rho_1,\ldots,\rho_n)}\leq \cc n^{b-1}$, it follows by the Sampling Lemma that
    \begin{equation}
        \frac{n}{\abs{S}}\sum_{i\in S}\left[\sum_{j}\trace(\sigma_{j}\rho_{i})t_{b-1}^{ij}(\rho_1,\ldots,\rho_n)\right]\in         \sum_{i=1}^n\left[\sum_{j}\trace(\sigma_{j}\rho_{i})t_{b-1}^{ij}(\rho_1,\ldots,\rho_n)\right]\pm \cc\sqrt{\frac{f}{g}}n^{b}.\label{eqn:reccase1}
    \end{equation}
    To first adjust for using flawed samples, observe that an analogous calculation to Eqn.~(\ref{eqn:sampleerror}) yields $\abs{\frac{n}{\abs{S}}\sum_{i\in S}\left[\sum_{j}\trace(\sigma_{j}(\rho_{i}-\tilde{\rho}_{i}))\right]}\leq \cc\delta n^{b}$,
    where we have called Lem.~\ref{l:ubound} with $a=b-1$. Thus, using flawed samples, the output of Alg.~\ref{alg:estimate} satisfies
    \begin{equation}
        \frac{n}{\abs{S}}\sum_{i\in S}\left[\sum_{j}\trace(\sigma_{j}\tilde{\rho}_{i})t_{b-1}^{ij}\right]\in         \sum_{i=1}^n\left[\sum_{j}\trace(\sigma_{j}\rho_{i})t_{b-1}^{ij}\right]\pm \cc\left(\sqrt{\frac{f}{g}}+\delta\right)n^{b}.\label{eqn:exact}
    \end{equation}
    To next drop the assumption that our estimates $e_{ij}$ on line 1(b) are exact, apply the induction hypothesis to conclude that $e_{ij}\in t_{b-1}^{ij}(\rho_1,\ldots,\rho_n) \pm \epsilon_{b-1}n^{b-1}$. Then,
    \begin{eqnarray}
        \frac{n}{\abs{S}}\sum_{i\in S}\left[\sum_{j}\trace(\sigma_{j}\tilde{\rho}_{i})e_{ij}\right]&\in&
        \frac{n}{\abs{S}}\sum_{i\in S}\left[\sum_{j}\trace(\sigma_{j}\tilde{\rho}_{ij})\left(t_{b-1}^{ij} \pm \epsilon_{b-1}n^{b-1}\right)\right]\nonumber\\
        &\subseteq&
        \frac{n}{\abs{S}}\sum_{i\in S}\left[\sum_{j}\trace(\sigma_{j}\tilde{\rho}_{i})t_{b-1}^{ij}\right] \pm \frac{\epsilon_{b-1}n^{b}}{\abs{S}}\sum_{i\in S}\left[\sum_{j=1}^{d^2}\trace(\sigma_{j}\tilde{\rho}_{i}) \right]\nonumber\\
        &\subseteq&
        \frac{n}{\abs{S}}\sum_{i\in S}\left[\sum_{j}\trace(\sigma_{j}\tilde{\rho}_{i})t_{b-1}^{ij}\right] \pm \epsilon_{b-1}\sqrt{2}d(1+\delta)n^{b}\label{eqn:recurse2},
    \end{eqnarray}
    where the last statement follows since
    \begin{equation}
        \abs{\sum_{j=1}^{d^2}\trace(\sigma_{j}\tilde{\rho}_{i})}=\abs{\sum_{j=1}^{d^2}\trace\left(\sigma_{j}\left(\sum_{m=1}^{d^2} \tilde{r}_m \sigma_m\right)\right)}\leq 2\sum_{m=1}^{d^2} \abs{\tilde{r}_m}\leq 2d\enorm{\ve{\tilde{r}}}\leq\sqrt{2}d(1+\delta),\label{eqn:yetanotherequation}
    \end{equation}
    where $\ve{\tilde{r}}$ denotes the coordinate vector of $\tilde{\rho}_{i}$ with respect to basis $\set{\sigma_m}$, and we have used the facts that $\trace(\sigma_i\sigma_j)=2\delta_{ij}$, that $\onorm{\ve{x}}\leq \sqrt{d}\enorm{\ve{x}}$ for $\ve{x}\in\complex^d$, that $\fnorm{\tilde{\rho}_{i}}=\sqrt{2}\enorm{\ve{\tilde{r}}}$ for any $\tilde{\rho}_{i}\in H(\complex^d)$, and that $\fnorm{\tilde{\rho}_{i}}\leq 1+\delta$ (which follows from our $\delta$-net and the triangle inequality). Thus, recalling that $\Delta=\sqrt{2}d(1+\delta)$ and substituting Eqn.~(\ref{eqn:exact}) into Eqn.~(\ref{eqn:recurse2}), we have that
    \[
        \frac{n}{\abs{S}}\sum_{i\in S}\left[\sum_{j}\trace(\sigma_{j}\tilde{\rho}_{i})e_{ij}\right]\in
        t_b(\rho_1,\ldots,\rho_n)\pm \left[\cc\left(\sqrt{\frac{f}{g}}+\delta\right)+\epsilon_{b-1}\Delta\right]n^{b}.
    \]
    We hence have the recurrence relation $\epsilon_b\leq \cc\left(\sqrt{\frac{f}{g}}+\delta\right)+\epsilon_{b-1}\Delta$, which when unrolled yields
    \[
        \epsilon_b\leq \cc\left(\sqrt{\frac{f}{g}}+\delta\right)\sum_{m=0}^{b-1}\Delta^m=\cc\left(\sqrt{\frac{f}{g}}+\delta\right)\left(\frac{\Delta^{b}-1}{\Delta-1}\right),
    \]
    as desired. This concludes the proof of the error bound.

    To prove the probability bound, we instead prove the stronger bound of $1-\left(\sum_{m=0}^{b-1}d^{2m} n^m\right)n^{-f}$ by induction on $b$. The base case $b=1$ follows directly from our application of the Sampling Lemma in Eqn.~(\ref{eqn:basecaseerror}). For the inductive step, define for brevity of notation $\gamma:=d^2 n$, and apply the induction hypothesis to line 1(b) of Alg.~\ref{alg:estimate} to conclude that each of the $\gamma$ calls to EVAL fails will probability at most $(\sum_{m=0}^{b-2}\gamma^m)n^{-f}$. Then, by the union bound, the probability that at least one call fails is at most $(\sum_{m=1}^{b-1}\gamma^m)n^{-f}$. Similarly, since our application of the Sampling Lemma in line 2 of Alg.~\ref{alg:estimate} fails with probability at most $n^{-f}$, we arrive at our claimed stronger bound of $1-\left(\sum_{m=0}^{b-1}\gamma^m\right)n^{-f}$, as desired.
\end{proof}

\begin{proof}[\textbf{Proof of Lem.~\ref{l:feasible}}]
    We begin by observing that if one sets $\epsilon=\epsilon_k$, then the value of $\epsilon^\prime$ in line 2(b) of Alg.~\ref{alg:linearize} is precisely $\epsilon_{k-1}$, and more generally, the $\epsilon$ passed into the recursive call of line 2(e) on $t_b$ for any $1\leq b\leq k$ is $\epsilon_b$. Now, focus on some recursive call on $t_b$ for $b>1$ (the case of $b=1$ is straightforward by Lem.~\ref{l:evalbound}). If the estimates $e_{ij}$ in line 2(a) succeed, then by Lem.~\ref{l:evalbound}, we know that $e_{ij}\in t_{b-1}^{ij}(\rho_1,\ldots,\rho_n)\pm \epsilon_{b-1}n^{b-1}$, implying $t_{b-1}^{ij}(\rho_1,\ldots,\rho_n)\in[l_{ij},u_{ij}]$. Now, $l_{ij}$ and $u_{ij}$ are only incorporated into linear constraints in recursive calls on $t_{b-1}^{ij}$, yielding constraints of the form
    \begin{equation}
        l_{i_bj_b}-\epsilon_{b-2} d^2n^{b-1}\leq \sum_{i_{b-1},j_{b-1}}\trace(\sigma_{j_{b-1}}{\rho}_{i_{b-1}})e_{i_{b-1}j_{b-1}}\leq u_{i_bj_b}+\epsilon_{b-2} d^2n^{b-1}.\label{eqn:lemma5_1}
    \end{equation}
    But $\set{\rho_1,\ldots,\rho_n}$ must now satisfy this constraint, since recall
    \[
        t_{b-1}(\rho_1,\ldots,\rho_n)=\sum_{i_{b-1},j_{b-1}}\trace(\sigma_{j_{b-1}}{\rho}_{i_{b-1}})t_{b-2}^{i_{b-1}j_{b-1}}(\rho_1,\ldots,\rho_n),
    \]
    and there are $d^2n$ terms $e_{i_{b-1}j_{b-1}}$ in Eqn.~(\ref{eqn:lemma5_1}) each yielding an additional error of at most $\epsilon_{b-2}n^{b-2}$ (assuming EVAL succeeded on $t_{b-2}^{i_{b-1}j_{b-1}}$ in line 2(a)) above and beyond the bounds $t_{b-1}^{ij}(\rho_1,\ldots,\rho_n)\in[l_{ij},u_{ij}]$ we established above.

    We conclude that if, for \emph{all} $b$, $i$, and $j$, EVAL succeeds in producing estimates $e_{i_b}^{ij}$, then $\set{\rho_1,\ldots,\rho_n}$ is a feasible solution for $P_2$, as desired. The probability of this happening is, by the proof of Lem.~\ref{l:evalbound}, at least $1-d^{2k}n^{k-f}$, since EVAL recursively estimates precisely the same terms during its execution\footnote{This holds even though on line 1 of Alg.~\ref{alg:estimate}, we only estimate $d^2\abs{S}$ of the terms $e_{ij}$ (i.e. EVAL does not actually estimate \emph{all} terms in the recursive decomposition of $t_k$, as it does not need to) --- this is because in our analysis of the probability bound for Alg.~\ref{alg:estimate}, we actually produced a looser bound by assuming all $n$ terms $e_{ij}$ are estimated.}.
\end{proof}

\begin{proof}[\textbf{Proof of Lem.~\ref{l:guarantee}}]
    We begin by proving that for any recursive call to LINEARIZE on $t_b$ with valid upper and lower bounds $U$ and $L$ (i.e. $U,L\neq\infty$), respectively, we have for \emph{any} feasible solution $(\rho_1,\ldots,\rho_n)$ to $P_2$ that
    \begin{equation}\label{eqn:miniclaim}
        t_b(\rho_1,\ldots,\rho_n) \in [L,U] \pm d(d+\sqrt{2})\left[\sum_{m=1}^{b-1}(\sqrt{2}d)^{b-1-m}\epsilon_m\right]n^{b}.
    \end{equation}

    We prove this by induction on $b$, following~\cite{AKK99}. For base case $b=1$, the claim is trivial by line 1(b) of the algorithm. Now, assume by induction hypothesis that
    \[
        t_{b-1}^{ij}(\rho_1,\ldots,\rho_n)\in [l_{ij},u_{ij}]\pm d(d+\sqrt{2})\left[\sum_{m=1}^{b-2}(\sqrt{2}d)^{b-2-m}\epsilon_m\right]n^{b-1}.
    \]
    By substituting the values of $l_{ij}$ and $u_{ij}$ from line 2(c), we have
    \[
        t_{b-1}^{ij}(\rho_1,\ldots,\rho_n)\in e_{ij}\pm \left(d(d+\sqrt{2})\left[\sum_{m=1}^{b-2}(\sqrt{2}d)^{b-2-m}\epsilon_m\right]+\epsilon_{b-1}\right)n^{b-1}.
    \]
    We conclude that
    \begin{eqnarray}
        t_b(\rho_1,\ldots,\rho_n) &=& \sum_{ij}\trace(\sigma_{j}\rho_{i})t_{b-1}^{ij}(\rho_1,\ldots,\rho_n)\nonumber\\
        &\subseteq&        \left[\sum_{ij}\trace(\sigma_{j}\rho_{i})e_{ij}\right]+ \left(d(d+\sqrt{2})\left[\sum_{m=1}^{b-2}(\sqrt{2}d)^{b-2-m}\epsilon_m\right]+\epsilon_{b-1}\right)\left[\sum_{ij}\trace(\sigma_{j}\rho_{i})\right]n^{b-1}\nonumber\\
        &\subseteq&        \left[\sum_{ij}\trace(\sigma_{j}\rho_{i})e_{ij}\right]+ \sqrt{2}d\left(d(d+\sqrt{2})\left[\sum_{m=1}^{b-2}(\sqrt{2}d)^{b-2-m}\epsilon_m\right]+\epsilon_{b-1}\right)n^{b}\label{eqn:miniclaim2}\\        &\subseteq&        \left[[L,U]\pm \epsilon_{b-1}d^2 n^b\right]+ \sqrt{2}d\left(d(d+\sqrt{2})\left[\sum_{m=1}^{b-2}(\sqrt{2}d)^{b-2-m}\epsilon_m\right]+\epsilon_{b-1}\right)n^{b}\nonumber\\    &\subseteq&        [L,U] \pm d(d+\sqrt{2})\left[\sum_{m=1}^{b-1}(\sqrt{2}d)^{b-1-m}\epsilon_m\right]n^{b}\nonumber,
        \end{eqnarray}
    where the third statement follows from a calculation similar to Eqn.~(\ref{eqn:yetanotherequation}), and the fourth statement from line 3(b) of Alg.~\ref{alg:linearize}. This proves the claim of Eqn.~(\ref{eqn:miniclaim}).

    To complete the proof of Lem.~\ref{l:guarantee}, observe that by Lem.~\ref{l:feasible}, the assignment $\rho^{\operatorname{opt}}$ is feasible for $P_2$ with probability at least $1-d^{2k}n^{k-f}$. Thus, plugging $\rho^{\operatorname{opt}}$ into each of the $d^2n$ linear constraints produced by the recursive calls to LINEARIZE on each $t_{k-1}^{ij}$, we have by Eqns.~(\ref{eqn:miniclaim}) and~(\ref{eqn:miniclaim2}) that (with probability $1-d^{2k}n^{k-f}$)
    \begin{eqnarray*}
        \optprod=t_k(\rho^{\operatorname{opt}})&=&\sum_{ij}\trace\left(\sigma_{j}\rho_{i}^{\operatorname{opt}}\right)t_{k-1}^{ij}(\rho^{\operatorname{opt}})\\
        &\subseteq&\left[\sum_{i,j}\trace(\sigma_{j}\rho_{i}^{\operatorname{opt}})e_{ij}\right]\pm \sqrt{2}d\left(d(d+\sqrt{2})\left[\sum_{m=1}^{k-2}(\sqrt{2}d)^{k-2-m}\epsilon_m\right]+\epsilon_{k-1}\right)n^{k}\\        &\subseteq&\optt\pm d(d+\sqrt{2})\left[\sum_{m=1}^{k-1}(\sqrt{2}d)^{k-1-m}\epsilon_m\right]n^{k},
                \end{eqnarray*}
    where the last statement follows since $\rho^{\operatorname{opt}}$ is not necessarily the optimal solution to $P_2$.
\end{proof}
\end{document}